\numberwithin{equation}{section}
\theoremstyle{definition}
\newtheorem{theorem}{Theorem}[section]
\newtheorem{proposition}[theorem]{Proposition}
\newtheorem{lemma}[theorem]{Lemma}
\newtheorem{fact}[theorem]{Fact}
\newtheorem{remark}[theorem]{Remark}
\newtheorem{example}[theorem]{Example}
\newtheorem{definition}{Definition}
\def\be{\begin{eqnarray}}
\def\ee{\end{eqnarray}}
\def\bex{\begin{eqnarray*}}
\def\eex{\end{eqnarray*}}
\DeclareMathOperator{\tr}{Tr}
\DeclareMathOperator{\re}{Re}
\DeclareMathOperator{\im}{Im}
\definecolor{Pr}{rgb}{0.4,0.3,0.9}
\title{A short tutorial on Wirtinger Calculus with applications in quantum information}
\author[1]{Kelvin Koor\thanks{\tt cqtkjkk@nus.edu.sg}}
\author[1]{Yixian Qiu\thanks{\tt yixian\_qiu@u.nus.edu}}
\author[1,2,3]{Leong Chuan Kwek\thanks{\tt kwekleongchuan@nus.edu.sg}}
\author[1]{Patrick Rebentrost\thanks{\tt cqtfpr@nus.edu.sg}}
\affil[1]{Centre for Quantum Technologies, National University of Singapore}
\affil[2]{MajuLab, CNRS-UNS-NUS-NTU International Joint Research Unit, UMI 3654}
\affil[3]{Quantum Science and Engineering Centre, Nanyang Technological University}
\date{\vspace{-5ex}}
\begin{document}
\maketitle

\begin{abstract}
The optimization of system parameters is a ubiquitous problem in science and engineering. The traditional approach involves setting to zero the partial derivatives of the objective function with respect to each parameter, in order to extract the optimal solution. However, the system parameters often take the form of complex matrices. In such situations, conventional methods become unwieldy. The `Wirtinger Calculus' provides a relatively simple methodology for such optimization problems. In this tutorial, we provide a pedagogical introduction to Wirtinger Calculus. To illustrate the utility of this framework in quantum information theory, we also discuss a few example applications.
\end{abstract}

\tableofcontents

\section{Introduction}
The optimization of system parameters is a ubiquitous problem in science and engineering. For an analytical solution, the partial derivatives of the objective function with respect to the parameters are set to zero. Alternatively, gradient descent could be applied iteratively, leading to increasingly accurate approximations to the optimal solution. In either case, this involves taking the (partial) derivatives of the objective function. Often, however, the problem is formulated in terms of \textit{complex}-valued parameters. While the complex derivatives of complex functions play an important role in complex analysis and its numerous applications to science and engineering, in the optimization of \textit{real}-valued\footnote{Objective functions are real- and not complex-valued because there is no ordering ($<$ and $>$) in the field of complex numbers, and thus no notion of minimization/maximization.} functions of \textit{complex} parameters they are much less useful. Furthermore, these parameters often come in the form of matrices, whose structure we would like to preserve for a deeper understanding of the problem.

In short, we are dealing with real-valued functions of complex matrices. How do we optimize such functions? In principle we could convert everything into real numbers---$M_N(\mathbb{C}) \cong \mathbb{R}^{2N^2}$, so we could view $f$ as a function of $2N^2$ real parameters and implement conventional optimization methods. However, this conversion is generally tedious, and the resulting expression for $f$ cumbersome. The `Wirtinger Calculus' provides a relatively simple methodology for the optimization of such functions, through the use of `Wirtinger derivatives'. This calculus is best viewed as a bookkeeping device. It justifies, and enables us to `differentiate as usual' (almost) with respect to the complex matrix parameter as a whole, as opposed to $2N^2$ distinct real parameters. Any dependencies among the elements of the input matrix (e.g. when the matrix is Hermitian) can also be accounted for via appropriate modifications.

The Wirtinger Calculus was developed principally by Austrian mathematician Wilhelm Wirtinger \cite{wirtinger1927formalen} in his paper on functions of several complex variables, although according to \cite{remmert1991theory} the concept goes back to Poincar\'{e} \cite{poincare1898proprietes}. It was subsequently rediscovered and extended by the electrical engineering community \cite{brandwood1983complex,van1994complex} for the purposes of optimizing real-valued functions with complex inputs. A compendium on this topic can be found in Hj\o{}rungnes' text \cite{hjorungnes2011complex}, alongside a concise summary \cite{hjorungnes2007complex}. Other good resources include \cite{kreutz2009complex}, Chapter 1 (Part 4) of \cite{remmert1991theory} and \cite{greene2006function}. Our exposition in this article is elementary, and only requires an understanding of the basics of multivariable calculus, linear algebra and complex analysis.

\subsection{Motivation}
The objects of interest in this discussion are real-valued functions with (generally complex) matrix inputs. We begin with a simple example that is not entirely trivial. Consider the function
\begin{align}\label{Equation:motivating_f}
    f: M_2(\mathbb{R}) &\longrightarrow \mathbb{R}\\\nonumber
    \mathbf{X} &\mapsto \tr(\mathbf{X}^2).
\end{align}
How do we optimize this function? The obvious way to do so would be to observe that for the purposes of optimization, the matrix structure of $\mathbf{X}$ can be disregarded. What matters is that $\mathbf{X}$ essentially comprises a set of parameters, and $f$ is a multivariable function. If we write 
$\mathbf{X}=
\begin{bmatrix}
    a & b\\
    c & d
\end{bmatrix}
$, then after identifying $\mathbf{X}$ with the vector $(a,b,c,d)$ we can express\footnote{Strictly speaking, the functions in Equations \ref{Equation:motivating_f} and \ref{Equation:motivating_new_f} are \textit{not} the same, simply because they have different domains ($M_2(\mathbb{R})$ versus $\mathbb{R}^4$).} $f$ as
\begin{align}\label{Equation:motivating_new_f}
    f(a,b,c,d) = a^2 + d^2 + 2bc.
\end{align}
Conventional optimization methods can then be implemented accordingly. (Note that constraints may have to be further imposed on $\mathbf{X}$ to ensure the existence of solutions. For example in Equation \ref{Equation:motivating_f} above we note that simply by rescaling $\mathbf{X}$ we can send $f(\mathbf{X})$ to $\pm \infty$, thus no optimal solution exists. To remedy this we could impose, say, $\tr \mathbf{X}=1$.)

One quickly observes the impracticality of this approach except in the simplest of cases. Even if the form of $f$ is relatively simple, like the one above, a few issues persist. For instance, if $\mathbf{X} \in M_n(\mathbb{R})$ for general $n$, then as $n$ increases the corresponding analog to $f$ in Equation \ref{Equation:motivating_new_f} becomes considerably messier. Likewise, had the field been $\mathbb{C}$ instead of $\mathbb{R}$, or if there were constraints among the elements of $\mathbf{X}$ (if $\mathbf{X} \in M_n(\mathbb{R})$ is symmetric, say, then the number of Lagrange multipliers required goes as $\Theta(n^2)$). But perhaps the most salient deficiency lies beyond the mathematics: that $\mathbf{X}$ is a matrix (and not just a bunch of numbers) often holds significant physical meaning, and one would like to preserve this structure to gain further insight into the problem. This is perhaps best illustrated with an example. Suppose the optimal solution to some problem (similar in flavour to Problem \ref{Equation:motivating_f} above) is given by
$\mathbf{X^\star}=
\begin{bmatrix}
    a^\star & b^\star\\
    c^\star & d^\star
\end{bmatrix}
$ with
\begin{align*}
    a^\star &= p^3 - p^2 + 2pqr + qrs - qr\\
    b^\star &= p^2q + q^2r + pqs + qs^2 - pq - qs\\
    c^\star &= p^2r + qr^2 + prs + rs^2 - pr - ps\\
    d^\star &= s^3 - s^2 + 2qrs + pqr - qr
\end{align*}
where 
$\mathbf{Y}= 
\begin{bmatrix}
    p & q\\
    r & s
\end{bmatrix}
$ is a matrix involved in the specification of the problem. One suspects it might be possible to express $\mathbf{X^\star}$ in terms of $\mathbf{Y}$, but it is not immediately obvious how\footnote{Here, $\mathbf{X^\star} = \mathbf{Y}^3 - \mathbf{Y}^2$. The solution form is admittedly rather artificial. For a `realistic' example see Example \ref{Example:constrained_vNEntropy} below.}. By using the `obvious' method above, we have obscured potentially important information in the form of an explicit connection between $\mathbf{X^\star}$ and $\mathbf{Y}$. 

Wouldn't it be nice if we could somehow differentiate $f(\mathbf{X})$ with respect to $\mathbf{X}$ as a whole? That is, we obtain something like
\begin{align*}
    \frac{d f(\mathbf{X})}{d \mathbf{X}} = f'(\mathbf{X})
\end{align*}
for some matrix function $f'$, which we can call the matrix derivative of $f$. Then by setting $f'(\mathbf{X})=0$ we could hopefully extract the optimal $\mathbf{X}$, in a similar flavour to the procedure taught in elementary calculus. Showing that this could be done for a large class of interesting and relevant functions, and how to do it, is the focus of this article.

\subsection{Preliminaries and notation}
We define the following notations. Let $\mathbb{N} = \{1,2,\dots\}$ be the set of positive natural numbers. For $n \in \mathbb{N}$, $[n] = \{1,2,\dots,n\}$. If $z \in \mathbb{C}$, $z^* \in \mathbb{C}$ denotes its complex conjugate. $M_n(\mathbb{F})/\mathbb{F}^{n \times n}$ is the set of $n \times n$ dimensional matrices over the field $\mathbb{F}$, where $\mathbb{F} = \mathbb{R}/\mathbb{C}$. We denote a Hilbert space by $\mathcal{H}$ ($\mathcal{H}_N$ if its dimension is to be explicitly specified), the set of linear operators on $\mathcal{H}$ by $\mathcal{L}(\mathcal{H})$, and the set of density operators on $\mathcal{H}$ by $\mathcal{D}(\mathcal{H})$. $\|\cdot\|_F$ is the Frobenius norm. The symbol $\odot$ denotes component-wise product, e.g. for vectors $(v \odot w)_i = v_iw_i$, for matrices $(A \odot B)_{ij} = A_{ij}B_{ij}$. Throughout most of this text, we adopt the convention whereby scalar quantities are denoted by lowercase symbols ($x$), vector quantities by lowercase boldface $(\mathbf{x})$ and matrix quantities by capital boldface $(\mathbf{X})$.

We will not be pedantic with domain/codomain issues. While we generally write $f: \mathbb{C} \longrightarrow \mathbb{C}$ or $f: \mathbb{R}^n \longrightarrow \mathbb{R}^m$, points where $f$ is not well-defined are implicitly excluded from the domain. E.g. if $f(z) = 1/z$ then the domain of $f$ is understood to be $\mathbb{C}\setminus\{0\}$. Similarly, a function is \textit{complex-differentiable} at a point if its complex derivative exists at that point, and it is \textit{holomorphic} on an open set if it is complex-differentiable at every point in the open set. This is strictly speaking a stronger condition than complex-differentiability, but we shall simply use the two terms interchangeably.

\subsection{Real vs. Complex Differentiability}\label{Section:real_vs_complex_differentiability}
As will be shown below in Proposition \ref{Proposition:Optimization_with_Wirtinger_derivatives}, non-constant real-valued functions of complex numbers cannot be complex-differentiable, in the sense that the quantity
\begin{align*}
\frac{df(z)}{dz} = \lim_{h \rightarrow 0} \frac{f(z+h)-f(z)}{h}
\end{align*}
is not well-defined. However, the optimization of $f$ entails taking the derivatives $\frac{\partial f}{\partial x}$ and $\frac{\partial f}{\partial y}$, which we assume a priori are well-defined. That $\frac{\partial f}{\partial x}$ and $\frac{\partial f}{\partial y}$ exist, but not $\frac{df}{dz}$, might cause some confusion since $z=(x,y)$. In this section, we clarify the difference between real- and complex- differentiability of a function $f:\mathbb{C} \longrightarrow \mathbb{C}$. The material presented here is standard fare in real and complex analysis, see for example \cite{rudin1953principles,remmert1991theory,greene2006function}.

In single-variable calculus, the derivative of $f: \mathbb{R} \longrightarrow \mathbb{R}$ at $x$ is
\begin{align}\label{Equation:real_derivative}
    f'(x) = \lim_{h \rightarrow 0} \frac{f(x+h)-f(x)}{h},
\end{align}
with the usual interpretation being the `slope of the tangent of $f$ at $x$'. For multivariable functions the generalization of Equation \ref{Equation:real_derivative} is as follows. We say $f: \mathbb{R}^n \longrightarrow \mathbb{R}^m$ is \textit{real-differentiable} at $x$ if there exists $T \in \mathcal{L}(\mathbb{R}^n,\mathbb{R}^m)$ such that
\begin{align}\label{Equation:real_derivative_multivariable}
    \lim_{h \rightarrow 0} \frac{\|f(x+h)-f(x)-T(h)\|}{\|h\|} = 0,
\end{align}
where $\|\cdot\|$ denotes the Euclidean norm. We write $f'(x) = T$ and call it the derivative/differential of $f$ at $x$. Thus, this generalized derivative is a linear operator, not just a number (when $n=m=1$, a linear operator is just a scaling, so it can be identified with the scale factor, a real number). The matrix representation of $f'(x)$ with respect to the standard Euclidean bases is the Jacobian matrix $\left[\frac{\partial f_i}{\partial x_j}(x)\right]_{\substack{1 \leq i \leq m\\1 \leq j \leq n}}$. While the geometric picture of a slope no longer holds, the interpretation of $f'(x)$ as best linear approximation of $f$ at $x$ still applies.
The differential also subsumes as special cases the directional derivative (when $n=1$) and the gradient vector (when $m=1$). It is a standard result that for an open set $O \subseteq \mathbb{R}^n$, $f$ is real-differentiable on $O$ if and only if all its partial derivatives $\frac{\partial f_i}{\partial x_j}$, $1 \leq i \leq m, 1 \leq j \leq n$ exist and are continuous on $O$ (Theorem 9.21, \cite{rudin1953principles}). Thus for functions of our interest, we can assume they are real-differentiable since we almost always assume $\frac{\partial f_i}{\partial x_j}$ are all smooth.

The complex derivative is obtained by directly extending the concept of infinitesimal difference quotients in calculus (Eq. \ref{Equation:real_derivative}) to the complex domain. A function $f: \mathbb{C} \longrightarrow \mathbb{C}$ is \textit{complex-differentiable} at $z$ if the derivative $f'(z) = \lim_{h \rightarrow 0} \frac{f(z+h)-f(z)}{h}$ exists, or equivalently,
\begin{align}\label{Equation:complex_derivative}
    \lim_{h \rightarrow 0} \frac{f(z+h)-f(z)-f'(z)h}{h} = 0.
\end{align}
Since $f: \mathbb{C} \longrightarrow \mathbb{C}$ is also $f: \mathbb{R}^2 \longrightarrow \mathbb{R}^2$, the notion of real-differentiability also applies to $f$. How do Equations \ref{Equation:real_derivative_multivariable} and \ref{Equation:complex_derivative} relate to each other? First, recall that the Cauchy-Riemann equations
\begin{align}\label{Equation:Cauchy-Riemann}
    \frac{\partial u}{\partial x} = \frac{\partial v}{\partial y}, \qquad \frac{\partial u}{\partial y} = -\frac{\partial v}{\partial x}
\end{align}
holding on $\mathcal{O}$ form a necessary and sufficient condition for $f$ to be complex-differentiable on $\mathcal{O}$. In this case, $f'(z) = u_x+iv_x = v_y-iu_y$. We next recast \ref{Equation:complex_derivative} into a form more similar to \ref{Equation:real_derivative_multivariable}. The multiplication of two complex numbers can be viewed as a matrix acting on a vector in $\mathbb{R}^2$: for $z,w \in \mathbb{C}$,
\begin{align*}
    z \cdot w = (\re z \re w - \im z \im w) + i(\re z \im w + \im z \re w) &=
    \begin{bmatrix}
        \re z \re w - \im z \im w\\
        \re z \im w + \im z \re w
    \end{bmatrix}\\
    &= 
    \begin{bmatrix}
        \re z & -\im z\\
        \im z & \re z
    \end{bmatrix}
    \begin{bmatrix}
        \re w\\
        \im w        
    \end{bmatrix}.
\end{align*}
Thus we have that if $f'(z)$ exists, then
\begin{alignat*}{5}
    &\lim_{h \rightarrow 0} \frac{f(z+h)-f(z)-f'(z)h}{h} = 0 &&\\
    \implies &\lim_{h \rightarrow 0} \frac{f(z+h)-f(z)-
    \begin{bmatrix}
        u_x & -v_x\\
        v_x & u_x
    \end{bmatrix}h
    }{h} = 0\\
    \implies &\lim_{h \rightarrow 0} \frac{f(z+h)-f(z)-
    \begin{bmatrix}
        u_x & u_y\\
        v_x & v_y
    \end{bmatrix}h
    }{h} = 0 \qquad &&\text{by Cauchy-Riemann}\\
    \implies &\lim_{h \rightarrow 0} \frac{f((x,y)+h)-f(x,y)-
    f'(x,y)h
    }{h} = 0 \qquad &&\text{since $f'(x,y) = 
    \begin{bmatrix}
        u_x & u_y\\
        v_x & v_y
    \end{bmatrix}$}\\
    \iff &\lim_{h \rightarrow 0} \frac{\|f((x,y)+h)-f(x,y)-
    f'(x,y)h
    \|}{\|h\|} = 0,
\end{alignat*}{5}
i.e. $f$ is real-differentiable, and the complex derivative $f'(z)$ (a complex number) is identified with the differential/real derivative $f'(x,y)$ (a linear operator). Furthermore, note that 
\begin{align*}
    f'(z) = f'(x,y) = 
    \begin{bmatrix}
        u_x & u_y\\
        v_x & v_y
    \end{bmatrix} = 
    \begin{bmatrix}
        u_x & -v_x\\
        v_x & u_x
    \end{bmatrix} = rA,
\end{align*}
where $0 \leq r \in \mathbb{R}$ and $A \in SO_2(\mathbb{R})$. This operator is a composition of scaling and an orientation-preserving rotation (the scaling factor and rotation angle generally differ from point to point). Such maps are widely called `conformal' in the literature. In short, we have the following fact.
\begin{fact}
    If $f: \mathbb{C} \longrightarrow \mathbb{C}$ is complex-differentiable/holomorphic, then it is real-differentiable. In this case, its differential $f'(x,y)$ is a conformal operator and can be identified with $f'(z)$.
\end{fact}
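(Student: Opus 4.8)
The plan is to organize the chain of implications displayed just above the statement into a clean forward argument: assume $f$ is complex-differentiable at $z$ and exhibit an explicit linear operator $T$ witnessing real-differentiability. First I would set $z = (x,y)$, write $f = u + iv$, and record that complex-differentiability together with the Cauchy-Riemann equations (Equation \ref{Equation:Cauchy-Riemann}) yields the single complex number $f'(z) = u_x + i v_x$. The crucial observation is that the term $f'(z)h$ appearing in the complex difference quotient (Equation \ref{Equation:complex_derivative}) is a complex \emph{multiplication}, and multiplication by a fixed complex number is $\mathbb{R}$-linear in $h$; concretely, multiplication by $f'(z)$ is represented on $\mathbb{R}^2$ by the matrix $\begin{bmatrix} u_x & -v_x \\ v_x & u_x \end{bmatrix}$, exactly as in the displayed computation of $z \cdot w$.

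Next I would identify this matrix with the Jacobian. Applying $u_x = v_y$ and $u_y = -v_x$ turns $\begin{bmatrix} u_x & -v_x \\ v_x & u_x \end{bmatrix}$ into $\begin{bmatrix} u_x & u_y \\ v_x & v_y \end{bmatrix}$, which is precisely the Jacobian of $f$ viewed as a map $\mathbb{R}^2 \to \mathbb{R}^2$. Taking $T$ to be this operator, Equation \ref{Equation:complex_derivative} reads $\lim_{h \to 0} \frac{f((x,y)+h) - f(x,y) - T(h)}{h} = 0$. To reach the real-differentiability condition (Equation \ref{Equation:real_derivative_multivariable}) I would pass from the complex quotient to the norm quotient, using that on $\mathbb{C} \cong \mathbb{R}^2$ the modulus coincides with the Euclidean norm, so that a complex ratio tends to $0$ iff the ratio of its modulus to $|h|$ does. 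This gives $\lim_{h \to 0} \frac{\|f((x,y)+h) - f(x,y) - T(h)\|}{\|h\|} = 0$, i.e. $f$ is real-differentiable with $f'(x,y) = T$.

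Finally I would verify conformality and make the identification explicit. The matrix $\begin{bmatrix} u_x & -v_x \\ v_x & u_x \end{bmatrix}$ is exactly the $\mathbb{R}^2$-representation of multiplication by $f'(z) = u_x + i v_x$, which is the precise sense in which the operator $f'(x,y)$ \emph{is} the number $f'(z)$. Setting $r = \sqrt{u_x^2 + v_x^2} = |f'(z)| \geq 0$, one factors $\begin{bmatrix} u_x & -v_x \\ v_x & u_x \end{bmatrix} = r\begin{bmatrix} \cos\theta & -\sin\theta \\ \sin\theta & \cos\theta \end{bmatrix}$ with $\theta = \arg f'(z)$, exhibiting the differential as a nonnegative scaling composed with an element of $SO_2(\mathbb{R})$, hence conformal.

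I do not expect a serious obstacle, since the algebra is forced once the correct representation of complex multiplication is chosen. The one point demanding care is the transition between the two limits: I must justify that the complex limit, taken along \emph{all} approach directions, is equivalent to the single norm limit, and this is precisely where the identity $|w/h| = \|w\|/\|h\|$ does the work and where the full strength of complex-differentiability (rather than mere existence of the partials) is used. A minor degenerate case worth flagging is $f'(z) = 0$, where $r = 0$ and the rotation angle is ill-defined; there the differential is simply the zero operator, which one regards as conformal by convention.
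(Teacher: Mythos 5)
Your proposal is correct and follows essentially the same route as the paper: representing multiplication by $f'(z)$ as the matrix $\begin{bmatrix} u_x & -v_x \\ v_x & u_x \end{bmatrix}$, invoking Cauchy--Riemann to identify it with the Jacobian, passing from the complex difference quotient to the norm quotient, and factoring the differential as a nonnegative scaling composed with a rotation. Your added care about the limit equivalence and the degenerate case $f'(z)=0$ (where the paper silently allows $r=0$) is a welcome refinement but not a different argument.
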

The converse does not hold: if a map $f$ is real-differentiable it need not be complex-differentiable, the classic example being complex conjugation: $f(z) = z^\star$ or equivalently $f(x,y) = (x,-y)$. The conformality of the differential makes complex-differentiability a much more rigid property compared to real-differentiability. This rigidity however underlies many remarkable properties of holomorphic functions and thus the power of complex analysis, broadly defined as the study of these functions.

\section{Wirtinger Calculus}

\subsection{The basic idea}\label{Subsection:Basic_idea}
We begin by discussing general complex-valued functions, $f: \mathbb{C} \longrightarrow \mathbb{C}$. Since $\mathbb{C}$ is just $\mathbb{R}^2$ endowed with the multiplication operation $(a,b) \times (c,d) \mapsto (ac-bd,ad+bc)$, we can view $f$ as
\begin{align*}
    f:\; &\mathbb{R}^2 \longrightarrow \mathbb{R}^2\\
    &(x,y) \mapsto (u(x,y),v(x,y)).
\end{align*}
Here we assume $f$ is real-differentiable, but $f$ need not be complex-differentiable (refer to Section \ref{Section:real_vs_complex_differentiability} above for the difference). Often, however, $f$ is expressed in terms of $z$ and $z^*$. More precisely, viewing $z,z^*$ as functions from $\mathbb{R} \times \mathbb{R}$ to $\mathbb{C}$, one has a function $\tilde{f}: \mathbb{C} \times \mathbb{C} \longrightarrow \mathbb{C}$ such that
\begin{align}\label{Equation:tildef_to_f}
    f(x,y) =: \tilde{f}\circ (z,z^*)(x,y) = \tilde{f}(z(x,y),z^*(x,y)) = \tilde{f}(x+iy,x-iy).
\end{align}
Conversely, we also have
\begin{align}\label{Equation:f_to_tildef}
    \tilde{f}(z,z^*) = f\circ (x,y)(z,z^*) = f(x(z,z^*),y(z,z^*)) = f\left(\frac{z+z^*}{2},\frac{z-z^*}{2}\right).
\end{align}
Partial differentiating $f$ and applying the chain rule gives
\begin{align*}
    \frac{\partial f}{\partial x}(x,y) &= \frac{\partial \tilde{f}}{\partial z}\frac{\partial z}{\partial x} + \frac{\partial \tilde{f}}{\partial z^*}\frac{\partial z^*}{\partial x} = \frac{\partial \tilde{f}}{\partial z}(z(x,y),z^*(x,y)) + \frac{\partial \tilde{f}}{\partial z^*}(z(x,y),z^*(x,y))\\
    \frac{\partial f}{\partial y}(x,y) &= \frac{\partial \tilde{f}}{\partial z}\frac{\partial z}{\partial y} + \frac{\partial \tilde{f}}{\partial z^*}\frac{\partial z^*}{\partial y} = i \cdot \frac{\partial \tilde{f}}{\partial z}(z(x,y),z^*(x,y)) - i \cdot \frac{\partial \tilde{f}}{\partial z^*}(z(x,y),z^*(x,y)).
\end{align*}
After rearranging terms,
\begin{align*}
    \frac{\partial \tilde{f}}{\partial z}(z(x,y),z^*(x,y)) &= \frac{1}{2}\left(\frac{\partial f}{\partial x} - i\frac{\partial f}{\partial y}\right) (x,y)\\
    \frac{\partial \tilde{f}}{\partial z^*}(z(x,y),z^*(x,y)) &= \frac{1}{2}\left(\frac{\partial f}{\partial x} + i\frac{\partial f}{\partial y}\right) (x,y).
\end{align*}
The notations $z,z^*$ may raise questions on independence. This is irrelevant---one may simply write $z_1,z_2$ if one wishes. We emphasize that the fundamental input variables are the two real numbers $x$ and $y$. 

The purpose of the discussion above is as follows. We have that $f$ and $\tilde{f}$ are two different expressions of the complex function $f: \mathbb{C} \longrightarrow \mathbb{C}$, $f$ being expressed in terms of $x$ and $y$, while $\tilde{f}$ being expressed in terms of $z$ and $z^*$. Often, it is the `tilde-form' of $f$ that is given. In such cases, certain procedures become a hassle. Examples include verifying the Cauchy-Riemann conditions, or finding the optimal points for $f$ (in the case $f$ is real-valued). The common feature of these operations is that they involve taking the partial derivatives $\frac{\partial}{\partial x}, \frac{\partial}{\partial y}$ of $f$, which requires $f$ to be expressed in terms of $x$ and $y$ beforehand. This conversion (from $\tilde{f}(z,z^*)$ to $f(x,y)$, using Eq. \ref{Equation:tildef_to_f}) is generally tedious, and the resulting form of $f(x,y)$ cumbersome. For instance, consider $\tilde{f}(z,z^*)=z^mz^{*n}$ for large $m,n \in \mathbb{Z}$. As we shall see below, the operators $\frac{\partial}{\partial z}, \frac{\partial}{\partial z^*}$ allow us to circumvent this by retaining the form $\tilde{f}(z,z^*)$ and partially differentiate with respect to $z$ and $z^*$ instead. In a nutshell, we can view $\tilde{f}(z,z^*)$ as a black box encapsulating the complexity of $f(x,y)$ and $\frac{\partial}{\partial z},\frac{\partial}{\partial z^*}$ as `higher-level' operators acting on the black box.

\begin{definition}[Wirtinger Derivatives]
    Let $f(x,y) = (u(x,y),v(x,y))$ be a complex function where $u(x,y)$ and $v(x,y)$ are differentiable functions with respect to $x,y$.\footnote{We often just assume $u$ and $v$ are smooth.} Write $f(x,y)=\tilde{f}(z(x,y),z^*(x,y))$, where $\tilde{f}: \mathbb{C} \times \mathbb{C} \longrightarrow \mathbb{C}$. The Wirtinger derivatives of $f$ are the partial derivatives of $\tilde{f}$ with respect to $z$ and $z^*$:
    \begin{align}
        \frac{\partial \tilde{f}}{\partial z} \qquad \text{and} \qquad \frac{\partial \tilde{f}}{\partial z^*}.
    \end{align}
\end{definition}

\begin{remark}
    As discussed above, the Wirtinger derivatives satisfy
    \begin{align}\label{Equation:Wirtinger_interpretation}
        \frac{\partial \tilde{f}}{\partial z}(z(x,y),z^*(x,y)) &= \frac{1}{2}\left(\frac{\partial f}{\partial x} - i\frac{\partial f}{\partial y}\right) (x,y)\\\nonumber
        \frac{\partial \tilde{f}}{\partial z^*}(z(x,y),z^*(x,y)) &= \frac{1}{2}\left(\frac{\partial f}{\partial x} + i\frac{\partial f}{\partial y}\right) (x,y).
    \end{align}
    These two equations tell us what is happening `under the hood': partially differentiating with respect to $z/z^*$ then writing $z,z^*$ in terms of $x,y$ is equivalent to first converting $\tilde{f}(z,z^*)$ into $f(x,y)$ then applying the operators $\frac{1}{2}\left(\frac{\partial f}{\partial x} \mp i\frac{\partial f}{\partial y}\right)$. Often, we abuse notation and disregard the distinction between $\tilde{f}$ and $f$ (i.e. write both $f(x,y)$ and $f(z,z^*)$). Doing so, we can simply write
    \begin{align}
        \frac{\partial}{\partial z} = \frac{1}{2}\left(\frac{\partial}{\partial x} - i\frac{\partial}{\partial y}\right), \qquad \frac{\partial}{\partial z^*} = \frac{1}{2}\left(\frac{\partial}{\partial x} + i\frac{\partial}{\partial y}\right).
    \end{align}
    \underline{Henceforth we shall abuse notation and do away with $\tilde{f}$, writing both $f(x,y)$ and $f(z,z^*)$} (simply bear in mind that $f$ really is a function of two real variables, not two complex variables). Thus, $\frac{\partial}{\partial z}f(z,z^*)$ really means $\frac{\partial}{\partial z}\tilde{f}(z,z^*)$ and $\frac{\partial}{\partial z}f(x,y)$ really means $\frac{1}{2}\left( \frac{\partial}{\partial x} - i\frac{\partial}{\partial y} \right)f(x,y)$ (likewise for $\frac{\partial}{\partial z^*}$).
\end{remark}

\begin{example}
    Let $f(z,z^*)=z$. Then $\frac{\partial}{\partial z}z = 1 = \frac{1}{2}\left( \frac{\partial}{\partial x} - i\frac{\partial}{\partial y} \right)(x+iy)$ and $\frac{\partial}{\partial z^*}z = 0 = \frac{1}{2}\left( \frac{\partial}{\partial x} + i\frac{\partial}{\partial y} \right)(x+iy)$. Likewise, $\frac{\partial}{\partial z}z^* = 0 = \frac{1}{2}\left( \frac{\partial}{\partial x} - i\frac{\partial}{\partial y} \right)(x-iy)$ and $\frac{\partial}{\partial z^*}z^* = 1 = \frac{1}{2}\left( \frac{\partial}{\partial x} + i\frac{\partial}{\partial y} \right)(x-iy)$. 
\end{example}

\begin{example}
Consider the function $f(z,z^*) = z^2z^*$. Here $f(x,y) = (x+iy)^2(x-iy) = u(x,y) + iv(x,y)$ where $u(x,y) = x^3+xy^2$, $v(x,y) = x^2y+y^3$. We verify Equations \ref{Equation:Wirtinger_interpretation}:
\begin{alignat*}{5}
    \frac{\partial f}{\partial z} &= 2zz^* = 2(x^2+y^2) &&= \frac{1}{2}\left( \frac{\partial}{\partial x} - i\frac{\partial}{\partial y} \right)(u+iv)\\
    \frac{\partial f}{\partial z^*} &= z^2 = x^2+2ixy+y^2 &&= \frac{1}{2}\left( \frac{\partial}{\partial x} + i\frac{\partial}{\partial y} \right)(u+iv).
\end{alignat*}
\end{example}

We state without proof the following:
\begin{fact}
    The Wirtinger derivatives satisfy the following properties:
    \begin{enumerate}[i.]
        \item Linearity: $\frac{\partial}{\partial z/z^*}(af+bg) = a\frac{\partial f}{\partial z/z^*}+b\frac{\partial g}{\partial z/z^*}$ \quad \text{for} $a,b \in \mathbb{C}$
        \item Product: $\frac{\partial}{\partial z/z^*}(f \cdot g) = \frac{\partial f}{\partial z/z^*}\cdot g + f\cdot \frac{\partial g}{\partial z/z^*}$
        \item Chain: $\frac{\partial}{\partial z/z^*}(g \circ f) = \frac{\partial g}{\partial f}\frac{\partial f}{\partial z/z^*} + \frac{\partial g}{\partial f^*}\frac{\partial f^*}{\partial z/z^*}$
        \item Conjugate: $\frac{\partial f}{\partial z^*} = \left( \frac{\partial f^*}{\partial z} \right)^*$. In particular, if $f$ is real-valued then $\frac{\partial f}{\partial z^*} = \left( \frac{\partial f}{\partial z} \right)^*$.
    \end{enumerate}
\end{fact}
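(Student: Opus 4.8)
The unifying strategy is to push every property down to the ordinary real partial derivatives $\partial_x$ and $\partial_y$ via the defining identities $\frac{\partial}{\partial z} = \frac{1}{2}\left(\frac{\partial}{\partial x} - i\frac{\partial}{\partial y}\right)$ and $\frac{\partial}{\partial z^*} = \frac{1}{2}\left(\frac{\partial}{\partial x} + i\frac{\partial}{\partial y}\right)$. Each Wirtinger operator is thus a fixed $\mathbb{C}$-linear combination of $\partial_x$ and $\partial_y$, and since these two operators already act $\mathbb{C}$-linearly on complex-valued functions (the symbol $i$ being a mere constant that commutes through $\partial_x,\partial_y$), linearity is immediate: a linear combination of linear maps is linear. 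For the product rule I would expand $\frac{\partial(fg)}{\partial z} = \frac{1}{2}\left(\frac{\partial(fg)}{\partial x} - i\frac{\partial(fg)}{\partial y}\right)$, apply the familiar real product rule to the $\partial_x$ and $\partial_y$ terms, and regroup; the terms carrying $g$ collect into $\frac{1}{2}\left(f_x - if_y\right)g = \frac{\partial f}{\partial z}\,g$ and those carrying $f$ into $f\,\frac{\partial g}{\partial z}$, with the $+i$ sign handling $\partial/\partial z^*$ identically.

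The chain rule is where the real work lies, and I expect the bookkeeping here to be the main obstacle. I would set $w = f = p + iq$ with $p,q$ the real and imaginary parts of $f$, regard $g$ as a function of the real pair $(p,q)$, and apply the ordinary multivariable chain rule to $h(x,y) = g\big(p(x,y),q(x,y)\big)$ to obtain $\frac{\partial h}{\partial z} = \frac{\partial g}{\partial p}\frac{\partial p}{\partial z} + \frac{\partial g}{\partial q}\frac{\partial q}{\partial z}$. The crux is then to perform two simultaneous changes of variable. On the $g$ side I would invert the Wirtinger definitions to write $\frac{\partial}{\partial p} = \frac{\partial}{\partial w} + \frac{\partial}{\partial w^*}$ and $\frac{\partial}{\partial q} = i\left(\frac{\partial}{\partial w} - \frac{\partial}{\partial w^*}\right)$; on the $f$ side I would use $\frac{\partial p}{\partial z} = \tfrac{1}{2}\left(\frac{\partial f}{\partial z} + \frac{\partial f^*}{\partial z}\right)$ and $\frac{\partial q}{\partial z} = \tfrac{1}{2i}\left(\frac{\partial f}{\partial z} - \frac{\partial f^*}{\partial z}\right)$. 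Substituting all four relations and expanding, the cross terms cancel in pairs and what survives is precisely $\frac{\partial g}{\partial w}\frac{\partial f}{\partial z} + \frac{\partial g}{\partial w^*}\frac{\partial f^*}{\partial z}$, which under the notation $\frac{\partial g}{\partial f} := \frac{\partial g}{\partial w}$ is the asserted identity (the $\partial/\partial z^*$ version being the same computation). The delicate point is tracking the two substitutions at once and verifying the pairwise cancellation without a sign slip.

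For the conjugate rule the key observation is that $\partial_x$ and $\partial_y$ commute with complex conjugation because $x,y$ are real; writing $f = u + iv$ gives $(\partial_x f)^* = \partial_x f^*$ and likewise for $y$. I would then start from $\frac{\partial f^*}{\partial z} = \frac{1}{2}\left((\partial_x f)^* - i(\partial_y f)^*\right)$ and take complex conjugates, which turns the bracket into $\partial_x f + i\partial_y f$ and yields $\left(\frac{\partial f^*}{\partial z}\right)^* = \frac{1}{2}\left(\partial_x f + i\partial_y f\right) = \frac{\partial f}{\partial z^*}$, exactly the claim. The special case follows at once: if $f$ is real-valued then $f^* = f$, whence $\frac{\partial f}{\partial z^*} = \left(\frac{\partial f}{\partial z}\right)^*$. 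All four parts therefore reduce to elementary facts about $\partial_x,\partial_y$, with only the chain rule requiring genuine care.
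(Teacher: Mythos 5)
The paper states this fact explicitly \emph{without} proof, so there is no in-paper argument to compare yours against; what can be said is that your proposal is correct and supplies precisely the argument the paper's framework invites. All four reductions check out. Linearity and the product rule follow, as you say, from writing $\frac{\partial}{\partial z} = \frac{1}{2}\left(\frac{\partial}{\partial x} - i\frac{\partial}{\partial y}\right)$, $\frac{\partial}{\partial z^*} = \frac{1}{2}\left(\frac{\partial}{\partial x} + i\frac{\partial}{\partial y}\right)$ and invoking linearity and Leibniz for $\partial_x,\partial_y$. Your conjugation argument is sound because $\partial_x,\partial_y$ commute with complex conjugation ($x,y$ being real). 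For the chain rule, the bookkeeping you flag as delicate does in fact close up: with $w=p+iq$, the inversions $\frac{\partial}{\partial p} = \frac{\partial}{\partial w} + \frac{\partial}{\partial w^*}$ and $\frac{\partial}{\partial q} = i\left(\frac{\partial}{\partial w} - \frac{\partial}{\partial w^*}\right)$, together with $\frac{\partial p}{\partial z} = \tfrac{1}{2}\left(\frac{\partial f}{\partial z} + \frac{\partial f^*}{\partial z}\right)$ and $\frac{\partial q}{\partial z} = \tfrac{1}{2i}\left(\frac{\partial f}{\partial z} - \frac{\partial f^*}{\partial z}\right)$, give
\begin{align*}
    \frac{\partial (g\circ f)}{\partial z} = \tfrac{1}{2}\left(\frac{\partial g}{\partial w} + \frac{\partial g}{\partial w^*}\right)\left(\frac{\partial f}{\partial z} + \frac{\partial f^*}{\partial z}\right) + \tfrac{1}{2}\left(\frac{\partial g}{\partial w} - \frac{\partial g}{\partial w^*}\right)\left(\frac{\partial f}{\partial z} - \frac{\partial f^*}{\partial z}\right) = \frac{\partial g}{\partial w}\frac{\partial f}{\partial z} + \frac{\partial g}{\partial w^*}\frac{\partial f^*}{\partial z},
\end{align*}
the cross terms cancelling pairwise exactly as you predict (the factor $i\cdot\tfrac{1}{2i}=\tfrac12$ keeps the signs consistent), and the $\partial/\partial z^*$ case is verbatim the same. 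Your method is moreover the same device the paper itself uses in Section \ref{Subsection:Basic_idea} to introduce the Wirtinger derivatives---the real multivariable chain rule applied through $(x,y)$---just run here through an intermediate complex variable, so your proof fills the omitted step in a way that is entirely native to the paper's development.
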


We emphasize that $f$ need not be complex-differentiable---so long as $f(x,y)$ is real-differentiable, or equivalently, the partial derivatives $u_x,u_y,v_x,v_y$ exist (see Section \ref{Section:real_vs_complex_differentiability}), the Wirtinger derivatives of $f$ exist. If $f$ is complex-differentiable, however, one Wirtinger derivative becomes zero and the other reduces to the normal complex derivative.
\begin{proposition}\label{Proposition:Wirtinger_derivatives_when_f_holomorphic}
    Let $\mathcal{O} \subseteq \mathbb{C}$ be an open set. $f$ is holomorphic on $\mathcal{O}$ if and only if the following hold on $\mathcal{O}$:
    \begin{alignat*}{5}
        \frac{\partial f}{\partial z^*} &= 0 \qquad &&(\text{Cauchy-Riemann condition})\\
        \frac{\partial f}{\partial z} &= \frac{df}{dz} \qquad &&(\text{reduction to usual complex derivative}).
    \end{alignat*}
\end{proposition}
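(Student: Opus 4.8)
The plan is to convert both stated conditions into statements about the real partial derivatives, using the operator identities $\frac{\partial}{\partial z} = \frac{1}{2}\left(\frac{\partial}{\partial x} - i\frac{\partial}{\partial y}\right)$ and $\frac{\partial}{\partial z^*} = \frac{1}{2}\left(\frac{\partial}{\partial x} + i\frac{\partial}{\partial y}\right)$ from the preceding remark, and then to compare against the Cauchy-Riemann equations (Eq.~\ref{Equation:Cauchy-Riemann}), whose equivalence with complex-differentiability on an open set was already recalled in Section~\ref{Section:real_vs_complex_differentiability}. Throughout I write $f = u + iv$ with $u,v$ real-valued and smooth on $\mathcal{O}$, so that all the partial derivatives $u_x,u_y,v_x,v_y$ exist and the Wirtinger derivatives are well-defined.

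First I would compute $\frac{\partial f}{\partial z^*}$ directly. Applying $\frac{1}{2}\left(\frac{\partial}{\partial x} + i\frac{\partial}{\partial y}\right)$ to $u + iv$ and collecting real and imaginary parts gives
\[
    \frac{\partial f}{\partial z^*} = \tfrac{1}{2}\left[(u_x - v_y) + i(u_y + v_x)\right].
\]
Since a complex number vanishes precisely when both its real and imaginary parts vanish, the condition $\frac{\partial f}{\partial z^*} = 0$ on $\mathcal{O}$ is equivalent to $u_x = v_y$ and $u_y = -v_x$ holding throughout $\mathcal{O}$, which is exactly the pair of Cauchy-Riemann equations. By the criterion already established, the Cauchy-Riemann equations hold on the open set $\mathcal{O}$ if and only if $f$ is complex-differentiable (holomorphic) there. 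This yields the equivalence between holomorphy and the vanishing of $\frac{\partial f}{\partial z^*}$, which is the genuine content of the proposition.

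It then remains to identify $\frac{\partial f}{\partial z}$ with the ordinary complex derivative under these conditions. I would compute $\frac{\partial f}{\partial z} = \frac{1}{2}\left(\frac{\partial}{\partial x} - i\frac{\partial}{\partial y}\right)(u+iv) = \tfrac{1}{2}\left[(u_x + v_y) + i(v_x - u_y)\right]$, and then substitute the Cauchy-Riemann relations $v_y = u_x$ and $u_y = -v_x$ to collapse this to $u_x + i v_x$, which is precisely the expression for $f'(z) = \frac{df}{dz}$ recorded earlier. Hence $\frac{\partial f}{\partial z} = \frac{df}{dz}$ whenever $f$ is holomorphic, establishing the forward direction; conversely, assuming both displayed conditions, the first alone already forces holomorphy, so the two statements are equivalent.

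This argument is essentially a direct computation, so I do not expect a serious obstacle. The one subtle point is logical packaging rather than analysis: the second condition $\frac{\partial f}{\partial z} = \frac{df}{dz}$ presupposes that $\frac{df}{dz}$ exists, and so is best read as a consequence of holomorphy (supplied by the first condition) rather than an independent hypothesis. I would therefore be careful to present the equivalence ``holomorphic $\iff \frac{\partial f}{\partial z^*} = 0$'' as the substantive claim, with the derivative-identification as a bookkeeping corollary, and to invoke the open-set hypothesis exactly where the Cauchy-Riemann criterion requires differentiability at every point of $\mathcal{O}$.
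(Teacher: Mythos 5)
Your proof is correct and follows essentially the same route as the paper's: both reduce the claim to the Cauchy--Riemann equations (already recalled as equivalent to holomorphy on an open set) by applying the operator identities $\frac{\partial}{\partial z} = \frac{1}{2}\left(\frac{\partial}{\partial x} - i\frac{\partial}{\partial y}\right)$ and $\frac{\partial}{\partial z^*} = \frac{1}{2}\left(\frac{\partial}{\partial x} + i\frac{\partial}{\partial y}\right)$ to $f = u + iv$ and reading off real and imaginary parts. Your closing remark on the logical packaging---that the condition $\frac{\partial f}{\partial z} = \frac{df}{dz}$ presupposes the existence of $\frac{df}{dz}$ and is best treated as a corollary of the first condition---is a careful point the paper's proof passes over silently, but the mathematical substance is identical.
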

\begin{proof}
    On an open set, $f$ is holomorphic if and only if $u_x=v_y$, $u_y=-v_x$ (the Cauchy-Riemann conditions) hold, see Section \ref{Section:real_vs_complex_differentiability}. Therefore
    \begin{align*}
        \frac{\partial f}{\partial z^*} = \frac{1}{2}(f_x+if_y) = \frac{1}{2}(u_x+iv_x+iu_y-v_y) = 0
    \end{align*}
    and
    \begin{align*}
        \frac{\partial f}{\partial z} = \frac{1}{2}(f_x+if_y) = \frac{1}{2}(u_x+iv_x-iu_y+v_y) = u_x+iv_x = f'(z).
    \end{align*}
\end{proof}

The concise equation $\frac{\partial f}{\partial z^*}=0$ thus merges the two Cauchy-Riemann equations into one. This also agrees with conventional wisdom whereby complex-differentiable functions `have no $z^*$ terms in their expressions'. The next result enables us to optimize for real-valued $f$ using Wirtinger derivatives.
\begin{proposition}\label{Proposition:Optimization_with_Wirtinger_derivatives}
    Let $f: \mathbb{C} \longrightarrow \mathbb{R}$ be a real-valued function. Then 
    \begin{enumerate}[i.]
        \item If $f$ is nonconstant, then it cannot be holomorphic, i.e. $\frac{df}{dz}$ does not exist.
        \item $f$ has a stationary point at $z=(x,y)$ if and only if
        \begin{align*}
            \frac{\partial f}{\partial z}(z) = 0 \quad \left( \text{or equivalently}\;\; \frac{\partial f}{\partial z^*}(z) = 0 \right).
        \end{align*}
    \end{enumerate}
\end{proposition}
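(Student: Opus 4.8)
The plan is to handle the two parts separately, with both following quickly from the interpretation of the Wirtinger derivatives recorded in Equation \ref{Equation:Wirtinger_interpretation} together with the conjugate property from the preceding Fact.

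For part (i), I would argue by contraposition: assume $f$ is holomorphic and deduce that it is constant. Since $f$ is real-valued, its imaginary part $v$ vanishes identically, so $v_x = v_y = 0$. The Cauchy--Riemann equations (Equation \ref{Equation:Cauchy-Riemann}) then force $u_x = v_y = 0$ and $u_y = -v_x = 0$, so every first partial of the real part $u$ vanishes and $u$ (hence $f$) is constant. Equivalently, and more in keeping with the Wirtinger formalism, Proposition \ref{Proposition:Wirtinger_derivatives_when_f_holomorphic} characterizes holomorphy by $\frac{\partial f}{\partial z^*} = 0$; since $f$ is real the conjugate property gives $\frac{\partial f}{\partial z} = \left(\frac{\partial f}{\partial z^*}\right)^* = 0$ as well, and because $\frac{\partial f}{\partial z} = \frac{df}{dz}$ for holomorphic $f$, the ordinary derivative vanishes everywhere, again forcing $f$ to be constant.

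For part (ii), the crux is that because $f$ is real-valued, both partials $f_x$ and $f_y$ are real numbers. Writing the Wirtinger derivative via Equation \ref{Equation:Wirtinger_interpretation},
\begin{align*}
    \frac{\partial f}{\partial z} = \frac{1}{2}\left(\frac{\partial f}{\partial x} - i\frac{\partial f}{\partial y}\right),
\end{align*}
I would identify $\tfrac{1}{2}f_x$ as its real part and $-\tfrac{1}{2}f_y$ as its imaginary part. Hence $\frac{\partial f}{\partial z}(z) = 0$ if and only if $f_x(z) = 0$ and $f_y(z) = 0$ simultaneously, which is precisely the condition that $f$, viewed as a map $\mathbb{R}^2 \to \mathbb{R}$, have vanishing gradient, i.e. a stationary point. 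The parenthetical equivalence with $\frac{\partial f}{\partial z^*}(z) = 0$ is then immediate from the conjugate property $\frac{\partial f}{\partial z^*} = \left(\frac{\partial f}{\partial z}\right)^*$ for real-valued $f$, since a complex number is zero exactly when its conjugate is.

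There is no substantial obstacle here; the content lies entirely in unwinding the definitions. The only points deserving care are that part (i) implicitly uses connectedness of the domain to upgrade ``all partials vanish'' to genuine global constancy (rather than merely local constancy), and that the clean real/imaginary split in part (ii) hinges on $f$ being \emph{exactly} real-valued, so that $f_x, f_y \in \mathbb{R}$.
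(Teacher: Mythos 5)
Your proposal is correct and follows essentially the same route as the paper's own proof: both parts rest on the observation that real-valuedness forces $v \equiv 0$, so that for part (ii) the vanishing of $\frac{\partial f}{\partial z} = \frac{1}{2}(f_x - i f_y)$ splits into $f_x = f_y = 0$ (the real and imaginary parts), and for part (i) the Cauchy--Riemann equations force all partials of $u$ to vanish, yielding constancy. Your added remarks (the connectedness caveat and the alternative phrasing of part (i) via Proposition \ref{Proposition:Wirtinger_derivatives_when_f_holomorphic}) are sound refinements but do not change the substance of the argument.
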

\begin{proof}
    If $f(x,y)=u(x,y)+iv(x,y)$ is real-valued, $v(x,y)$ is necessarily the zero function. Then
    \begin{align*}
        \frac{\partial f}{\partial z/z^*}(z) = 0 &\iff \frac{1}{2}\left( \frac{\partial f}{\partial x}(x,y) \mp i \frac{\partial f}{\partial y}(x,y)\right) = 0\\
        &\iff \frac{\partial f}{\partial x}(x,y)=0 \;\;\text{and}\;\; \frac{\partial f}{\partial y}(x,y)=0.
    \end{align*}
    Furthermore, assume $f$ is non-constant. If $f$ is holomorphic, the Cauchy-Riemann equations $u_x=v_y=0$ and $u_y=-v_x=0$ imply $u(x,y)$ is constant, contradicting our assumption on $f$.
\end{proof}
As in Proposition \ref{Proposition:Wirtinger_derivatives_when_f_holomorphic}, we see the role of the Wirtinger derivatives as bookkeeping devices. Here $\partial f/\partial z = 0$ effectively merges the optimality conditions $\partial f/\partial x = 0$ and $\partial f/\partial y = 0$ into a single equation. The nature of the stationary point (i.e. whether it is a minimum/maximum/saddle point) has to be checked by inspecting higher-order derivatives or via additional considerations.

\begin{example}
    Let $f(z,z^*) = |z|^4-|z|^2$. We would like to find the stationary points of $f$. One could first convert it into $f(x,y) = (x^2+y^2)^2-(x^2+y^2)$. Setting the gradient of $f$ to zero gives
    \begin{align*}
        2x(2(x^2+y^2)-1) &= 0\\
        2y(2(x^2+y^2)-1) &= 0,
    \end{align*}
    which after some algebra gives $(x,y)=(0,0)$ and $\{(x,y): x^2+y^2=1/2\}$ as the stationary points.
    
    Using the Wirtinger derivatives, one avoids the hassle of converting $f(z,z^*)$ to $f(x,y)$, then carrying out differentiation twice. We simply obtain
    \begin{align*}
        \frac{\partial f}{\partial z}(z) = 0 &\implies z^*(2zz^*-1) = 0\\
        &\implies z=0 \text{  or  } |z|=1/\sqrt{2},
    \end{align*}
    clearly equivalent to what was previously obtained.
\end{example}

\subsection{Extension to multivariable complex functions}\label{Subsection:Extension_wirtinger_multivariable}
The extension of this formalism to functions $f: \mathbb{C}^n \longrightarrow \mathbb{C}$ is straightforward. We have
\begin{align*}
    f:\; &\mathbb{R}^{2n} \longrightarrow \mathbb{R}^2\\
    &(x_1,y_1,\dots,x_n,y_n) = (\mathbf{x},\mathbf{y}) \mapsto (u(\mathbf{x},\mathbf{y}),v(\mathbf{x},\mathbf{y})).
\end{align*}
Now for $i=1,\dots,n$ regard $z_i,z_i^*$ as functions from $\mathbb{R}^n \times \mathbb{R}^n$ to $\mathbb{C}$, where
$z_i(\mathbf{x},\mathbf{y}) = x_i+iy_i$ and $z_i^*(\mathbf{x},\mathbf{y}) = x_i-iy_i$. Then we have a function $\tilde{f}: \mathbb{C}^n \times \mathbb{C}^n \longrightarrow \mathbb{C}$ such that
\begin{align}\label{Equation:tildef_to_f_multivariable}
    f(\mathbf{x},\mathbf{y}) =: \tilde{f}\circ (\mathbf{z},\mathbf{z^*})(\mathbf{x},\mathbf{y}) = \tilde{f}(\mathbf{z}(\mathbf{x},\mathbf{y}),\mathbf{z^*}(\mathbf{x},\mathbf{y})) = \tilde{f}(\mathbf{x+iy},\mathbf{x-iy}).
\end{align}
Partial differentiating $f$ with respect to each $x_i$ and $y_i$ gives
\begin{align*}
    \frac{\partial f}{\partial x_i}(\mathbf{x},\mathbf{y}) &= \sum_{k=1}^n \frac{\partial \tilde{f}}{\partial z_k}\frac{\partial z_k}{\partial x_i} + \sum_{k=1}^n \frac{\partial \tilde{f}}{\partial z_k^*}\frac{\partial z_k^*}{\partial x_i} = \frac{\partial \tilde{f}}{\partial z_i}(\mathbf{z}(\mathbf{x},\mathbf{y}),\mathbf{z^*}(\mathbf{x},\mathbf{y})) + \frac{\partial \tilde{f}}{\partial z_i^*}(\mathbf{z}(\mathbf{x},\mathbf{y}),\mathbf{z^*}(\mathbf{x},\mathbf{y}))\\
    \frac{\partial f}{\partial y_i}(\mathbf{x},\mathbf{y}) &= \sum_{k=1}^n \frac{\partial \tilde{f}}{\partial z_k}\frac{\partial z_k}{\partial y_i} + \sum_{k=1}^n \frac{\partial \tilde{f}}{\partial z_k^*}\frac{\partial z_k^*}{\partial y_i} = i\cdot \frac{\partial \tilde{f}}{\partial z_i}(\mathbf{z}(\mathbf{x},\mathbf{y}),\mathbf{z^*}(\mathbf{x},\mathbf{y})) - i\cdot \frac{\partial \tilde{f}}{\partial z_i^*}(\mathbf{z}(\mathbf{x},\mathbf{y}),\mathbf{z^*}(\mathbf{x},\mathbf{y}))
\end{align*}
so after rearranging terms we obtain the Wirtinger derivatives
\begin{align*}
    \frac{\partial \tilde{f}}{\partial \mathbf{z}}(\mathbf{z}(\mathbf{x},\mathbf{y}),\mathbf{z^*}(\mathbf{x},\mathbf{y})) &= \frac{1}{2}\left(\frac{\partial f}{\partial \mathbf{x}} - i\frac{\partial f}{\partial \mathbf{y}}\right) (\mathbf{x},\mathbf{y})\\
    \frac{\partial \tilde{f}}{\partial \mathbf{z^*}}(\mathbf{z}(\mathbf{x},\mathbf{y}),\mathbf{z^*}(\mathbf{x},\mathbf{y})) &= \frac{1}{2}\left(\frac{\partial f}{\partial \mathbf{x}} + i\frac{\partial f}{\partial \mathbf{y}}\right) (\mathbf{x},\mathbf{y}).
\end{align*}
Here $\frac{\partial}{\partial \mathbf{z}} = \left( 
\frac{\partial}{\partial z_n}, \dots, \frac{\partial}{\partial z_1} \right)$ and similarly for $\frac{\partial}{\partial \mathbf{z^*}}$, $\frac{\partial}{\partial \mathbf{x}}$ and $\frac{\partial}{\partial \mathbf{y}}$. As before, we abuse notation and write both $f(\mathbf{x},\mathbf{y})$ and $f(\mathbf{z},\mathbf{z^*})$, so we can write
\begin{align}
    \frac{\partial}{\partial \mathbf{z}} = \frac{1}{2}\left(\frac{\partial}{\partial \mathbf{x}} - i\frac{\partial}{\partial \mathbf{y}}\right), \qquad \frac{\partial}{\partial \mathbf{z^*}} = \frac{1}{2}\left(\frac{\partial}{\partial \mathbf{x}} + i\frac{\partial}{\partial \mathbf{y}}\right).
\end{align}
The extension of Proposition \ref{Proposition:Optimization_with_Wirtinger_derivatives} also holds, namely if $f: \mathbb{C}^n \longrightarrow \mathbb{R}$ is real-valued then $f$ has a stationary point at $\mathbf{z}=(\mathbf{x},\mathbf{y})$ if and only if $\frac{\partial f}{\partial \mathbf{z/z^*}}(\mathbf{z}) = \mathbf{0}$.

\section{Matrix Wirtinger Calculus}
Having discussed Wirtinger derivatives for multivariable functions, we are now ready to examine the commonly encountered scenario where the parameters are structured as matrices.

\subsection{Matrix Wirtinger derivatives}
Now we consider functions of the form $f: \mathbb{C}^{n \times n} \longrightarrow \mathbb{C}$.\footnote{We could more generally let the domain be $\mathbb{C}^{m \times n}$. Indeed, many of the results developed below hold for such matrices as well. But for simplicity and because our matrices of interest in applications are square, we let $m=n$ here.} The following development, which we include for pedagogical purposes, is virtually the same as that in Section \ref{Subsection:Extension_wirtinger_multivariable}, just with different indexing. We have
\begin{align*}
    f:\; &\mathbb{R}^{2(n \times n)} \longrightarrow \mathbb{R}^2\\
    &(x_{ij},y_{ij})_{i,j \in [n]}=(\mathbf{X},\mathbf{Y}) \mapsto (u(\mathbf{X},\mathbf{Y}),v(\mathbf{X},\mathbf{Y})).
\end{align*} 
For $i=1,\dots,n$ regard $z_{ij},z_{ij}^*$ as functions from $\mathbb{R}^{n \times n} \times \mathbb{R}^{n \times n}$ to $\mathbb{C}$, where
$z_{ij}(\mathbf{X},\mathbf{Y}) = x_{ij}+iy_{ij}$ and $z_{ij}^*(\mathbf{X},\mathbf{Y}) = x_{ij}-iy_{ij}$. Then we have a function $\tilde{f}: \mathbb{C}^{n \times n} \times \mathbb{C}^{n \times n} \longrightarrow \mathbb{C}$ such that
\begin{align}\label{Equation:tildef_to_f_matrix}
    f(\mathbf{X},\mathbf{Y}) =: \tilde{f}\circ (\mathbf{Z},\mathbf{Z^*})(\mathbf{X},\mathbf{Y}) = \tilde{f}(\mathbf{Z}(\mathbf{X},\mathbf{Y}),\mathbf{Z^*}(\mathbf{X},\mathbf{Y})) = \tilde{f}(\mathbf{X+iY},\mathbf{X-iY}).
\end{align}
Again partial differentiating $f$ with respect to each $x_{ij}$ and $y_{ij}$ gives
\begin{align*}
    \frac{\partial f}{\partial x_{ij}}(\mathbf{X},\mathbf{Y}) &= \sum_{k,l=1}^n \frac{\partial \tilde{f}}{\partial z_{kl}}\frac{\partial z_{kl}}{\partial x_{ij}} + \sum_{k,l=1}^n \frac{\partial \tilde{f}}{\partial z_{kl}^*}\frac{\partial z_{kl}^*}{\partial x_{ij}}\\
    &= \sum_{k,l=1}^n \frac{\partial \tilde{f}}{\partial z_{kl}}\delta_{ki}\delta_{lj} + \sum_{k,l=1}^n \frac{\partial \tilde{f}}{\partial z_{kl}^*}\delta_{ki}\delta_{lj}\\
    &= \frac{\partial \tilde{f}}{\partial z_{ij}}(\mathbf{Z}(\mathbf{X},\mathbf{Y}),\mathbf{Z^*}(\mathbf{X},\mathbf{Y})) + \frac{\partial \tilde{f}}{\partial z_{ij}^*}(\mathbf{Z}(\mathbf{X},\mathbf{Y}),\mathbf{Z^*}(\mathbf{X},\mathbf{Y}))
\end{align*}
and
\begin{align*}
    \frac{\partial f}{\partial y_{ij}}(\mathbf{X},\mathbf{Y}) &= \sum_{k,l=1}^n \frac{\partial \tilde{f}}{\partial z_{kl}}\frac{\partial z_{kl}}{\partial y_{ij}} + \sum_{k,l=1}^n \frac{\partial \tilde{f}}{\partial z_{kl}^*}\frac{\partial z_{kl}^*}{\partial y_{ij}}\\ 
    &= i\cdot\frac{\partial \tilde{f}}{\partial z_{ij}}(\mathbf{Z}(\mathbf{X},\mathbf{Y}),\mathbf{Z^*}(\mathbf{X},\mathbf{Y})) - i\cdot\frac{\partial \tilde{f}}{\partial z_{ij}^*}(\mathbf{Z}(\mathbf{X},\mathbf{Y}),\mathbf{Z^*}(\mathbf{X},\mathbf{Y})).
\end{align*}
Rearranging terms as before gives us for $1 \leq i,j \leq n$
\begin{align}\label{Equation:after_rearranging}
    \frac{\partial \tilde{f}}{\partial z_{ij}}(\mathbf{Z}(\mathbf{X},\mathbf{Y}),\mathbf{Z^*}(\mathbf{X},\mathbf{Y})) &= \frac{1}{2}\left(\frac{\partial f}{\partial x_{ij}} - i\frac{\partial f}{\partial y_{ij}}\right) (\mathbf{X},\mathbf{Y})\\\nonumber
    \frac{\partial \tilde{f}}{\partial z_{ij}^*}(\mathbf{Z}(\mathbf{X},\mathbf{Y}),\mathbf{Z^*}(\mathbf{X},\mathbf{Y})) &= \frac{1}{2}\left(\frac{\partial f}{\partial x_{ij}} + i\frac{\partial f}{\partial y_{ij}}\right) (\mathbf{X},\mathbf{Y}).
\end{align}
To preserve the matrix structure of the parameters $z_{ij}$ and $z_{ij}^*$ we use the standard notation
\begin{align}\label{Equation:matrix_wirtinger_derivatives}
    \frac{\partial}{\partial \mathbf{Z}} := 
    \begin{bmatrix}
        \frac{\partial}{\partial z_{11}} & \dots & \frac{\partial}{\partial z_{1n}}\\
        \vdots & \ddots & \vdots\\
        \frac{\partial}{\partial z_{n1}} & \dots & \frac{\partial}{\partial z_{nn}}
    \end{bmatrix} \qquad
    \frac{\partial}{\partial \mathbf{Z^*}} := 
    \begin{bmatrix}
        \frac{\partial}{\partial z_{11}^*} & \dots & \frac{\partial}{\partial z_{1n}^*}\\
        \vdots & \ddots & \vdots\\
        \frac{\partial}{\partial z_{n1}^*} & \dots & \frac{\partial}{\partial z_{nn}^*}
    \end{bmatrix}
\end{align}
and similarly for $\frac{\partial}{\partial \mathbf{X}}$ and $\frac{\partial}{\partial \mathbf{Y}}$. Then Equation \ref{Equation:after_rearranging} is concisely stated as 
\begin{align}
    \frac{\partial \tilde{f}}{\partial \mathbf{Z}}(\mathbf{Z}(\mathbf{X},\mathbf{Y}),\mathbf{Z^*}(\mathbf{X},\mathbf{Y})) &= \frac{1}{2}\left(\frac{\partial f}{\partial \mathbf{X}} - i\frac{\partial f}{\partial \mathbf{Y}}\right) (\mathbf{X},\mathbf{Y})\\\nonumber
    \frac{\partial \tilde{f}}{\partial \mathbf{Z^*}}(\mathbf{Z}(\mathbf{X},\mathbf{Y}),\mathbf{Z^*}(\mathbf{X},\mathbf{Y})) &= \frac{1}{2}\left(\frac{\partial f}{\partial \mathbf{X}} + i\frac{\partial f}{\partial \mathbf{Y}}\right) (\mathbf{X},\mathbf{Y}).
\end{align}
$\frac{\partial}{\partial \mathbf{Z}}$ and $\frac{\partial}{\partial \mathbf{Z^*}}$ are the \textit{matrix Wirtinger derivatives} of $f$ and they are the protagonists of this article. As before, we abuse notation and write both $f(\mathbf{X},\mathbf{Y})$ and $f(\mathbf{Z},\mathbf{Z^*})$, so we can write
\begin{align}
    \frac{\partial}{\partial \mathbf{Z}} = \frac{1}{2}\left(\frac{\partial}{\partial \mathbf{X}} - i\frac{\partial}{\partial \mathbf{Y}}\right), \qquad \frac{\partial}{\partial \mathbf{Z^*}} = \frac{1}{2}\left(\frac{\partial}{\partial \mathbf{X}} + i\frac{\partial}{\partial \mathbf{Y}}\right).
\end{align}
The following matrix version of Proposition \ref{Proposition:Optimization_with_Wirtinger_derivatives} will be invoked frequently in optimization problems. The proof is omitted since it is a simple generalization of the proof in Proposition \ref{Proposition:Optimization_with_Wirtinger_derivatives}.
\begin{proposition}\label{Proposition:Optimization_with_Wirtinger_derivatives_matrix}
    Let $f: \mathbb{C}^{n \times n} \longrightarrow \mathbb{R}$ be a real-valued function of complex matrices. Then $f$ has a stationary point at $\mathbf{Z}=[z_{ij}]_{i,j \in [n]}$ if and only if
    \begin{align*}
        \frac{\partial f}{\partial \mathbf{Z}}(\mathbf{Z}) = 0 \quad \left( \text{or equivalently}\;\; \frac{\partial f}{\partial \mathbf{Z^*}}(\mathbf{Z}) = 0 \right).
    \end{align*}
\end{proposition}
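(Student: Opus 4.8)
The plan is to mirror the scalar argument of Proposition~\ref{Proposition:Optimization_with_Wirtinger_derivatives} entrywise. First I would unwind the definition of a stationary point: viewing $f$ as a real-valued function of the $2n^2$ real variables $(x_{ij},y_{ij})_{i,j\in[n]}$, the point $\mathbf{Z}$ is stationary precisely when its gradient vanishes, i.e. when $\frac{\partial f}{\partial x_{ij}}(\mathbf{X},\mathbf{Y})=0$ and $\frac{\partial f}{\partial y_{ij}}(\mathbf{X},\mathbf{Y})=0$ for every pair $(i,j)$.

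Next I would examine the $(i,j)$ entry of the matrix Wirtinger derivative. By the definition in Equation~\ref{Equation:matrix_wirtinger_derivatives} together with Equation~\ref{Equation:after_rearranging}, this entry is $\frac{\partial f}{\partial z_{ij}}=\frac{1}{2}\left(\frac{\partial f}{\partial x_{ij}}-i\frac{\partial f}{\partial y_{ij}}\right)$. The crucial observation, and the only place where the real-valuedness hypothesis enters, is that since $f$ is real-valued both $\frac{\partial f}{\partial x_{ij}}$ and $\frac{\partial f}{\partial y_{ij}}$ are real numbers. Hence this entry is a complex number with real part $\frac{1}{2}\frac{\partial f}{\partial x_{ij}}$ and imaginary part $-\frac{1}{2}\frac{\partial f}{\partial y_{ij}}$, so it vanishes if and only if both partials vanish.

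Then I would assemble the entrywise statement into the matrix statement: a matrix equals the zero matrix if and only if all of its entries are zero, so $\frac{\partial f}{\partial \mathbf{Z}}(\mathbf{Z})=0$ if and only if $\frac{\partial f}{\partial x_{ij}}=\frac{\partial f}{\partial y_{ij}}=0$ for all $i,j$, which by the first step is exactly the condition that $\mathbf{Z}$ be stationary. For the parenthetical equivalence I would invoke the conjugate property of Wirtinger derivatives: for real-valued $f$ one has $\frac{\partial f}{\partial z_{ij}^*}=\left(\frac{\partial f}{\partial z_{ij}}\right)^*$ entrywise, so $\frac{\partial f}{\partial \mathbf{Z^*}}$ is the entrywise complex conjugate of $\frac{\partial f}{\partial \mathbf{Z}}$, and one matrix vanishes if and only if the other does.

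I do not anticipate a genuine obstacle, since the argument is a purely bookkeeping reshuffle of the scalar case; the only point requiring care is keeping the $2n^2$ independent real coordinates straight and confirming that it is precisely the real-valuedness of $f$ that permits separating the real and imaginary parts of each Wirtinger-derivative entry.
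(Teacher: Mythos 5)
Your proposal is correct and is exactly the ``simple generalization'' of the scalar proof of Proposition~\ref{Proposition:Optimization_with_Wirtinger_derivatives} that the paper invokes when it omits this proof: real-valuedness of $f$ makes each $\frac{\partial f}{\partial x_{ij}}$, $\frac{\partial f}{\partial y_{ij}}$ real, so each entry $\frac{1}{2}\left(\frac{\partial f}{\partial x_{ij}}-i\frac{\partial f}{\partial y_{ij}}\right)$ vanishes iff both partials do, and the matrix statement follows entrywise. Your handling of the parenthetical via the conjugate property matches the paper's $\mp$ bookkeeping in the scalar case, so there is nothing to add.
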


\subsection{Functions of special interest}\label{Subsection:Functions_of_special_interest}
In this section we give several concrete examples of $f(\mathbf{Z},\mathbf{Z^*})$. Two such functions come to mind immediately---the trace and determinant of a matrix. We shall pay special attention to the trace function, as it appears most frequently in quantum information theory. For instance, the $l_1$-norm (trace), $l_2$-norm (Hilbert-Schmidt), Uhlmann fidelity, von Neumann entropy etc. are/can be expressed in terms of traces \cite{wilde2013quantum}. The determinant plays a less prominent role, so we focus less on that. Nonetheless, we provide a list of common $f(\mathbf{Z},\mathbf{Z^*})$ and their Wirtinger derivatives, see Table \ref{Table:List_Wirtinger_derivatives}.

\begin{table}
\centering
\setlength{\tabcolsep}{18pt}
\renewcommand{\arraystretch}{1.8}
\begin{tabular}{ |l l l| }
\hline
$f(\mathbf{Z,Z^*})$ & $\frac{\partial f}{\partial \mathbf{Z}}$ & $\frac{\partial f}{\partial \mathbf{Z^*}}$ \\ 
\hline
$\mathbf{a^TZb}$ & $\mathbf{ab}^T$ & $\mathbf{0}$ \\ 
$\mathbf{a^\dag Zb}$ & $\mathbf{a^*b}^T$ & $\mathbf{0}$ \\  
$\tr \mathbf{Z}$ & $\mathbf{I}$ & $\mathbf{0}$ \\ 
$\tr \mathbf{Z^*}$ & $\mathbf{0}$ & $\mathbf{I}$ \\  
$\tr(\mathbf{AZ})$ & $\mathbf{A}^T$ & $\mathbf{0}$ \\ 
$\tr(\mathbf{A}\mathbf{Z}^T)$ & $\mathbf{A}$ & $\mathbf{0}$ \\ 
$\tr(\mathbf{AZ^*})$ & $\mathbf{0}$ & $\mathbf{A}^T$ \\ 
$\tr(\mathbf{AZ^\dag})$ & $\mathbf{0}$ & $\mathbf{A}$ \\ 
$\tr(\mathbf{ZAZB})$ & $(\mathbf{AZB+BZA})^T$ & $\mathbf{0}$ \\ 
$\tr(\mathbf{ZAZ}^T\mathbf{B})$ & $\mathbf{B}^T\mathbf{ZA}^T+\mathbf{BZA}$ & $\mathbf{0}$ \\  
$\tr(\mathbf{ZAZ^*B})$ & $(\mathbf{AZ^*B})^T$ & $(\mathbf{BZA})^T$ \\ 
$\tr(\mathbf{ZAZ^\dag B})$ & $\mathbf{B}^T\mathbf{Z^*A}^T$ & $\mathbf{BZA}$ \\ 
$\tr(\mathbf{Z}^k)$ & $k(\mathbf{Z}^T)^{k-1}$ & $\mathbf{0}$ \\  
$\tr(\mathbf{AZ}^k)$ & $\sum_{r=0}^{k-1} (\mathbf{Z}^r\mathbf{AZ}^{k-1-r})^T$ & $\mathbf{0}$ \\ 
$\tr(F(\mathbf{Z}))$ & $F'(\mathbf{Z})^T$ & $\mathbf{0}$ \\ 
$\|\mathbf{Z}\|_F^2$ & $\mathbf{Z^*}$ & $\mathbf{Z}$ \\
$\det \mathbf{Z}$ & $\det(\mathbf{Z})(\mathbf{Z}^T)^{-1}$ & $\mathbf{0}$ \\ 
$\det \mathbf{Z^*}$ & $\mathbf{0}$ & $\det(\mathbf{Z^*})(\mathbf{Z}^\dag)^{-1}$ \\ 
$\det(\mathbf{Z^\dag AZ})$ & $\det(\mathbf{Z^\dag AZ})(\mathbf{Z}^T)^{-1}$ & $\det(\mathbf{Z^\dag AZ})(\mathbf{Z}^\dag)^{-1}$ \\  
$\det(\mathbf{Z}^k)$ & $k\det^k(\mathbf{Z})(\mathbf{Z}^T)^{-1}$ & $\mathbf{0}$ \\
\hline
\end{tabular}
\caption{A list of scalar functions $f(\mathbf{Z,Z^*})$ and their Wirtinger derivatives, assuming the input matrices $\mathbf{Z,Z^*}$ are unstructured (see next section). The items here are selected from Table 4.3, \cite{hjorungnes2011complex} and Sections 2 and 4, \cite{petersen2008matrix}.}
\label{Table:List_Wirtinger_derivatives}
\end{table}

From Section 2.5, \cite{petersen2008matrix}, we have that for real matrices $\mathbf{X}$ and a smooth function $F(x) = \sum_{n=0}^\infty c_nx^n$, the derivative of $\tr(F(\mathbf{X}))$ is given by
\begin{align*}
\frac{\partial \tr(F(\mathbf{X}))}{\partial \mathbf{X}} &= F'(\mathbf{X})^T.
\end{align*}
This is straightforwardly generalized to complex matrices $\mathbf{Z}$ and analytic functions $F(z) = \sum_{n=0}^\infty c_nz^n$. The following proposition will be heavily used in Section \ref{Subsection:Applications_in_QI} when we discuss applications in quantum information.
\begin{proposition}\label{Proposition:Wirtinger_derivative_of_traces}
Let $\mathbf{Z}$ be a complex matrix and $F(z) = \sum_{n=0}^\infty c_nz^n$ be analytic. Define the scalar function $f(\mathbf{Z,Z^*}):= \tr(F(\mathbf{Z}))$. Then
\begin{align*}
\frac{\partial \tr(F(\mathbf{Z}))}{\partial \mathbf{Z}} = F'(\mathbf{Z})^T
\end{align*}
where $F'(\cdot)$ is the complex derivative of $F(\cdot)$.
\end{proposition}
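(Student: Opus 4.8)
The plan is to exploit the fact that $\tr(F(\mathbf{Z}))$ contains no dependence on the conjugate entries $z_{ij}^*$, so it is holomorphic in the $z_{ij}$ viewed as independent complex variables. By Proposition \ref{Proposition:Wirtinger_derivatives_when_f_holomorphic} the matrix Wirtinger derivative $\partial/\partial\mathbf{Z}$ then collapses to ordinary differentiation with respect to the entries $z_{ij}$ (while $\partial/\partial\mathbf{Z}^*$ vanishes identically). This reduces the claim to the algebraic computation of $\partial \tr(F(\mathbf{Z}))/\partial z_{ij}$, entry by entry.

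First I would establish the monomial case $\partial \tr(\mathbf{Z}^k)/\partial\mathbf{Z} = k(\mathbf{Z}^T)^{k-1}$, the entry already listed in Table \ref{Table:List_Wirtinger_derivatives}. The cleanest route is a first-order perturbation: writing $(\mathbf{Z}+\mathbf{H})^k = \mathbf{Z}^k + \sum_{j=0}^{k-1}\mathbf{Z}^j\mathbf{H}\mathbf{Z}^{k-1-j} + O(\|\mathbf{H}\|^2)$ and taking the trace, cyclicity collapses every summand to $\tr(\mathbf{Z}^{k-1}\mathbf{H})$, giving $\tr((\mathbf{Z}+\mathbf{H})^k) = \tr(\mathbf{Z}^k) + k\,\tr(\mathbf{Z}^{k-1}\mathbf{H}) + O(\|\mathbf{H}\|^2)$. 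Since $\tr(\mathbf{A}\mathbf{H}) = \sum_{i,j} A_{ji}H_{ij}$, reading off the coefficient of $H_{ij}$ identifies $\partial\tr(\mathbf{Z}^k)/\partial z_{ij} = k(\mathbf{Z}^{k-1})_{ji}$, i.e. the $(i,j)$ entry of $k(\mathbf{Z}^{k-1})^T = k(\mathbf{Z}^T)^{k-1}$. (Equivalently one can expand $\tr(\mathbf{Z}^k) = \sum z_{i_1 i_2}\cdots z_{i_k i_1}$ and differentiate directly, where the $k$ cyclically equivalent positions each contribute $(\mathbf{Z}^{k-1})_{ji}$.)

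I would then assemble the general statement. Using $F(\mathbf{Z}) = \sum_{n=0}^\infty c_n\mathbf{Z}^n$ together with linearity of the Wirtinger derivative (and noting the $n=0$ term is constant, hence contributes nothing),
\begin{align*}
\frac{\partial \tr(F(\mathbf{Z}))}{\partial\mathbf{Z}} = \sum_{n=1}^\infty c_n\,n(\mathbf{Z}^T)^{n-1} = \Big(\sum_{n=1}^\infty n c_n\mathbf{Z}^{n-1}\Big)^T = F'(\mathbf{Z})^T,
\end{align*}
where the middle equality uses that transposition is linear (and continuous), and the last uses the termwise formula $F'(z) = \sum_{n\geq 1} nc_n z^{n-1}$ evaluated at the matrix argument.

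The main obstacle is analytic rather than algebraic: one must justify differentiating the matrix power series term by term, i.e. interchanging $\partial/\partial z_{ij}$ with the infinite sum. This is legitimate precisely when $\mathbf{Z}$ lies in the region where the holomorphic functional calculus converges---concretely, when the spectral radius of $\mathbf{Z}$ is strictly less than the radius of convergence of $F$---so that $\sum_n c_n\mathbf{Z}^n$ and its formally differentiated series converge absolutely and uniformly on a neighbourhood of $\mathbf{Z}$. Establishing this uniform control (equivalently, a uniform bound on the $O(\|\mathbf{H}\|^2)$ remainders across the series in the perturbative version) is the one place where the hypothesis that $F$ is analytic is genuinely used.
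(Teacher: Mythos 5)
Your proposal is correct and follows essentially the same route as the paper: your first-order expansion of $\tr((\mathbf{Z}+\mathbf{H})^k)$ combined with trace cyclicity is precisely the paper's lemma on $\partial(\mathbf{Z}^n)_{lm}/\partial Z_{ij}$ (your perturbation $\mathbf{H}$ specialized to the unit matrix $\Delta_{ij}$), after which both arguments sum the series term by term to recognize $F'(\mathbf{Z})_{ji}$. If anything you are more careful than the paper, which simply asserts that ``summing, tracing and differentiating commute,'' whereas you flag that this interchange is exactly where analyticity (spectral radius of $\mathbf{Z}$ inside the radius of convergence, giving uniform convergence of the differentiated series) is needed.
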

\begin{proof}
The operations of summing, tracing and differentiating commute, so we have
\begin{alignat*}{5}
\frac{\partial \tr(F(\mathbf{Z}))}{\partial Z_{ij}} &= \sum_{n=0}^\infty c_n \tr \frac{\partial \mathbf{Z}^n}{\partial Z_{ij}}\\
&= \sum_{n=1}^\infty c_n \sum_l \sum_{r=0}^{n-1} \left(\mathbf{Z}^r \Delta_{ij} \mathbf{Z}^{n-1-r} \right)_{ll} \quad &&(\text{see lemma below})\\
&= \sum_{n=1}^\infty c_n \sum_{r=0}^{n-1} \sum_l \left(\Delta_{ij} \mathbf{Z}^{n-1} \right)_{ll} \quad &&(\text{cyclic permutation within trace})\\
&= \sum_{n=1}^\infty c_n \sum_{r=0}^{n-1} \sum_l \sum_p \Delta_{ij,lp} (\mathbf{Z}^{n-1})_{pl}\\
&= \sum_{n=1}^\infty c_n \sum_{r=0}^{n-1} \sum_l \sum_p \delta_{il}\delta_{jp} (\mathbf{Z}^{n-1})_{pl} \quad &&(\text{definition of $\Delta_{ij}$})\\
&= \sum_{n=1}^\infty c_n \sum_{r=0}^{n-1} (\mathbf{Z}^{n-1})_{ji}\\
&= \sum_{n=1}^\infty c_n n(\mathbf{Z}^{n-1})_{ji}\\
&= F'(\mathbf{Z})_{ji}.
\end{alignat*}
\end{proof}

\begin{lemma}[pg11, \cite{petersen2008matrix}]
For $n \geq 1$,
\begin{align*}
\frac{\partial (\mathbf{Z}^n)_{lm}}{\partial Z_{ij}} = \sum_{r=0}^{n-1} \left(\mathbf{Z}^r \Delta_{ij} \mathbf{Z}^{n-1-r} \right)_{lm}
\end{align*}
where $\Delta_{ij}$ is the matrix with entry $(i,j)$ equaling one and other entries zero, i.e. $\Delta_{ij,lp} := \delta_{il}\delta_{jp}$.
\end{lemma}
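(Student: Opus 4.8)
The plan is to prove this identity by induction on $n$, with the one essential input being the elementary observation that differentiating a single copy of $\mathbf{Z}$ produces the matrix $\Delta_{ij}$. Indeed, $\frac{\partial Z_{kl}}{\partial Z_{ij}} = \delta_{ki}\delta_{lj} = (\Delta_{ij})_{kl}$, so $\Delta_{ij}$ is precisely the ``derivative of $\mathbf{Z}$ with respect to its $(i,j)$ entry.'' Morally the lemma is nothing more than the noncommutative Leibniz (product) rule applied to $\mathbf{Z}^n = \mathbf{Z}\cdot\mathbf{Z}\cdots\mathbf{Z}$: one differentiates the product factor-by-factor and replaces the differentiated copy of $\mathbf{Z}$ by $\Delta_{ij}$, which yields the claimed sum over $r$. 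All the actual work lies in keeping the index bookkeeping honest, and induction is the cleanest way to enforce this.

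For the base case $n=1$, I would simply note that $\frac{\partial Z_{lm}}{\partial Z_{ij}} = \delta_{li}\delta_{mj} = (\Delta_{ij})_{lm}$, which matches the right-hand side since the sum collapses to its single $r=0$ term $(\mathbf{Z}^0 \Delta_{ij}\mathbf{Z}^0)_{lm} = (\Delta_{ij})_{lm}$.

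For the inductive step, assume the formula holds at $n$. I would write $(\mathbf{Z}^{n+1})_{lm} = \sum_p (\mathbf{Z}^n)_{lp} Z_{pm}$ and differentiate each summand by the ordinary product rule,
\begin{align*}
\frac{\partial (\mathbf{Z}^{n+1})_{lm}}{\partial Z_{ij}} = \sum_p \frac{\partial (\mathbf{Z}^n)_{lp}}{\partial Z_{ij}}\, Z_{pm} + \sum_p (\mathbf{Z}^n)_{lp}\,\frac{\partial Z_{pm}}{\partial Z_{ij}}.
\end{align*}
Into the first sum I substitute the inductive hypothesis and recognize $\sum_p (\mathbf{Z}^r \Delta_{ij}\mathbf{Z}^{n-1-r})_{lp} Z_{pm} = (\mathbf{Z}^r \Delta_{ij}\mathbf{Z}^{n-r})_{lm}$, producing $\sum_{r=0}^{n-1}(\mathbf{Z}^r \Delta_{ij}\mathbf{Z}^{n-r})_{lm}$. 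For the second sum I use $\frac{\partial Z_{pm}}{\partial Z_{ij}} = \delta_{pi}\delta_{mj}$ to collapse it to $(\mathbf{Z}^n)_{li}\delta_{mj} = (\mathbf{Z}^n \Delta_{ij})_{lm}$, which is exactly the missing $r=n$ term. Adding the two contributions gives $\sum_{r=0}^{n}(\mathbf{Z}^r \Delta_{ij}\mathbf{Z}^{n-r})_{lm}$, the asserted formula at index $n+1$, completing the induction.

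As for difficulty, there is no genuine obstacle: this is a routine consequence of the product rule, and the only point requiring care is that the interior matrix products in the inductive step are contracted over the correct index $p$, so that right-multiplication by $Z_{pm}$ raises each power $\mathbf{Z}^{n-1-r}$ to $\mathbf{Z}^{n-r}$ and the boundary term slots in cleanly as the new $r=n$ summand. Alternatively, one could bypass induction by expanding $(\mathbf{Z}^n)_{lm} = \sum_{k_1,\dots,k_{n-1}} Z_{lk_1}Z_{k_1 k_2}\cdots Z_{k_{n-1}m}$ and applying the Leibniz rule directly to this product of $n$ scalar entries; the term in which the $s$-th factor is differentiated contributes a Kronecker delta that inserts $\Delta_{ij}$ between $\mathbf{Z}^{s-1}$ and $\mathbf{Z}^{n-s}$, reproducing the same sum after reindexing. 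I would present the inductive version for brevity.
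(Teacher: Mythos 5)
Your proof is correct, but it is organized differently from the paper's. The paper proves the lemma by the direct expansion you sketch only as an alternative in your closing paragraph: it writes $(\mathbf{Z}^n)_{lm} = \sum_{k_1,\dots,k_{n-1}} Z_{lk_1}Z_{k_1k_2}\cdots Z_{k_{n-1}m}$ (with the conventions $k_0=l$, $k_n=m$) and applies the Leibniz rule once to this product of $n$ scalar factors, so that the term in which the $(r+1)$-st factor is differentiated contributes $\delta_{ik_r}\delta_{jk_{r+1}} = \Delta_{ij,k_rk_{r+1}}$, i.e.\ inserts $\Delta_{ij}$ between $\mathbf{Z}^r$ and $\mathbf{Z}^{n-1-r}$. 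Your induction replaces that multi-index bookkeeping with a one-factor-at-a-time argument: the inductive step needs only the two-factor product rule applied to $(\mathbf{Z}^{n+1})_{lm} = \sum_p (\mathbf{Z}^n)_{lp}Z_{pm}$, with the boundary term $(\mathbf{Z}^n)_{li}\delta_{mj} = (\mathbf{Z}^n\Delta_{ij})_{lm}$ slotting in as the $r=n$ summand; I checked the index contractions and they are sound. What the paper's one-pass expansion buys is immediacy --- the sum over insertion positions appears in a single computation, in the same style as the proof of Proposition \ref{Proposition:Wirtinger_derivative_of_traces} that consumes this lemma --- while your induction buys lighter notation, avoiding the strings of dummy indices $k_1,\dots,k_{n-1}$ entirely, at the cost of induction scaffolding. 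Both arguments rest on the same elementary inputs, namely the product rule and the relation $\partial Z_{kl}/\partial Z_{ij} = \delta_{ki}\delta_{lj}$, which holds because $\mathbf{Z}$ is assumed unstructured (and because $(\mathbf{Z}^n)_{lm}$ involves no conjugated entries, so the Wirtinger derivative acts as a formal partial derivative); the choice between them is a matter of taste.
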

\begin{proof}
Expanding the matrix product gives
\begin{align*}
\frac{\partial (\mathbf{Z}^n)_{lm}}{\partial Z_{ij}} &= \frac{\partial}{\partial Z_{ij}} \left( \sum_{k_1,...,k_{n-1}} Z_{lk_1}Z_{k_1k_2}...Z_{k_{n-2}k_{n-1}} Z_{k_{n-1}m} \right)\\
&= \sum_{k_1,...,k_{n-1}} \sum_{r=0}^{n-1} Z_{lk_1}Z_{k_1k_2}...Z_{k_{r-1}k_r}\delta_{ik_r}\delta_{jk_{r+1}}Z_{k_{r+1}k_{r+2}}...Z_{k_{n-1}m} \quad (\text{product rule})\\
&= \sum_{r=0}^{n-1} \sum_{k_1,...,k_{n-1}} Z_{lk_1}Z_{k_1k_2}...Z_{k_{r-1}k_r}\Delta_{ij,k_rk_{r+1}}Z_{k_{r+1}k_{r+2}}...Z_{k_{n-1}m}\\
&= \sum_{r=0}^{n-1} \left(\mathbf{Z}^r \Delta_{ij} \mathbf{Z}^{n-1-r} \right)_{lm}
\end{align*}
where $k_0=l$ and $k_n=m$ in the second line.
\end{proof}

\subsection{Structured matrices and the chain rule}
So far, by writing $f: \mathbb{C}^{n \times n} \longrightarrow \mathbb{C}$ we have implicitly assumed the input matrices have independent components. This condition often does not hold, e.g. when our matrices of interest are symmetric/Hermitian etc. Such matrices are termed `patterned' in \cite{hjorungnes2011complex} and `structured' in \cite{petersen2008matrix}) (we shall adopt the latter terminology). 

In this situation, the results in Table \ref{Table:List_Wirtinger_derivatives} no longer hold. To obtain the correct Wirtinger derivatives with respect to structured matrices, we resort to the chain rule. For pedagogical purposes, we first demonstrate how to use the chain rule to find the matrix derivative of a function with respect to real symmetric matrices. Then we do likewise to obtain the Wirtinger derivatives of a function of complex, Hermitian matrices. In Table \ref{Table:Structured_Wirtinger_derivatives} we give the Wirtinger derivatives of $f(\mathbf{Z,Z^*})$ when $\mathbf{Z}$ is diagonal, symmetric, anti-symmetric, Hermitian and anti-Hermitian. All can be derived from the chain rule.

Let $f(\mathbf{X})$ be a function of real matrices. The matrix derivative of $f(\mathbf{X})$ with respect to $\mathbf{X}$ has components $\partial f/\partial X_{ij}$. Using the chain rule, we write
\begin{align}\label{Equation:Structure_matrix_chain_rule}
    \frac{\partial f}{\partial X_{ij}} = \sum_{k,l=1}^n \frac{\partial f}{\partial \tilde{X}_{kl}} \frac{\partial \tilde{X}_{kl}}{\partial X_{ij}}.
\end{align}
Here, the tilde above $\mathbf{\tilde{X}}$ indicates that it is an \textit{unstructured} matrix. The term $\partial \tilde{X}_{kl}/\partial X_{ij}$ takes into account the dependencies of the components of $\mathbf{X}$. Following the terminology in \cite{petersen2008matrix}, we call it the \textit{structure matrix} of $\mathbf{X}$. If $\mathbf{X}$ is unstructured, then its structure matrix is simply
\begin{align*}
    \frac{\partial \tilde{X}_{kl}}{\partial X_{ij}} = \delta_{ki}\delta_{lj}.
\end{align*}

Now let $\mathbf{X}$ be symmetric. For diagonal matrix elements we have $\partial \tilde{X}_{kl}/\partial X_{ij} = \delta_{kl}\delta_{ij}$, while for off-diagonal matrix elements we have $\partial \tilde{X}_{kl}/\partial X_{ij} = \delta_{kl}\delta_{ij} + \delta_{li}\delta_{kj}$. We can write these succinctly as
\begin{align*}
    \frac{\partial \tilde{X}_{kl}}{\partial X_{ij}} &= \delta_{ki}\delta_{lj} + (1-\delta_{kl})\delta_{li}\delta_{kj}\\
    &= \delta_{ki}\delta_{lj} + \delta_{li}\delta_{kj} - \delta_{kl}\delta_{li}\delta_{kj}.
\end{align*}
Plugging this into Eq. \ref{Equation:Structure_matrix_chain_rule}, we obtain
\begin{align*}
    \frac{\partial f}{\partial X_{ij}} &= \sum_{k,l=1}^n \frac{\partial f}{\partial \tilde{X}_{kl}} \frac{\partial \tilde{X}_{kl}}{\partial X_{ij}}\\
    &= \sum_{k,l=1}^n \frac{\partial f}{\partial \tilde{X}_{kl}} (\delta_{ki}\delta_{lj} + \delta_{li}\delta_{kj} - \delta_{kl}\delta_{li}\delta_{kj})\\
    &= \frac{\partial f}{\partial \tilde{X}_{ij}} + \frac{\partial f}{\partial \tilde{X}_{ji}} - \delta_{ij}\frac{\partial f}{\partial \tilde{X}_{ij}}
\end{align*}
for $1\leq i,j \leq n$. This gives
\begin{proposition}[Matrix derivative with respect to symmetric matrices]\label{Proposition:Wirtinger_symmetric}
    Let $f(\mathbf{X})$ be a function of real symmetric matrices. Then the matrix derivative of $f$ with respect to $\mathbf{X}$ is
    \begin{align*}
        \frac{\partial f}{\partial \mathbf{X}} = \left[\frac{\partial f}{\partial \mathbf{\tilde{X}}} + \left(\frac{\partial f}{\partial \mathbf{\tilde{X}}}\right)^T - \mathbf{I} \odot \frac{\partial f}{\partial \mathbf{\tilde{X}}}\right]_{\mathbf{\tilde{X}}=\mathbf{X}}.
    \end{align*}
\end{proposition}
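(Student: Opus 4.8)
The plan is to convert the component-wise identity established immediately above the statement into an equality of $n \times n$ matrices, by matching each term to its matrix-level counterpart. The derivation via the chain rule (Eq.~\ref{Equation:Structure_matrix_chain_rule}) together with the symmetric structure matrix has already produced
\begin{equation*}
    \frac{\partial f}{\partial X_{ij}} = \frac{\partial f}{\partial \tilde{X}_{ij}} + \frac{\partial f}{\partial \tilde{X}_{ji}} - \delta_{ij}\frac{\partial f}{\partial \tilde{X}_{ij}},
\end{equation*}
valid for all $1 \le i,j \le n$, so the remaining work is purely a matter of reading off the matrix whose $(i,j)$ entry equals the right-hand side.

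First I would identify the left-hand side with the $(i,j)$ entry of the desired derivative $\partial f/\partial \mathbf{X}$, using the definition of the matrix derivative in Eq.~\ref{Equation:matrix_wirtinger_derivatives} (specialized to real matrices, where the role of $\mathbf{Z}$ is played by $\mathbf{X}$). Then I would match the three terms on the right one at a time. The term $\partial f/\partial \tilde{X}_{ij}$ is by definition the $(i,j)$ entry of $\partial f/\partial \mathbf{\tilde{X}}$. The term $\partial f/\partial \tilde{X}_{ji}$ is the $(j,i)$ entry of $\partial f/\partial \mathbf{\tilde{X}}$, which is exactly the $(i,j)$ entry of the transpose $(\partial f/\partial \mathbf{\tilde{X}})^T$. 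Finally, since $(\mathbf{I} \odot \mathbf{A})_{ij} = \delta_{ij}\mathbf{A}_{ij}$ by the definition of the component-wise product, the term $\delta_{ij}\,\partial f/\partial \tilde{X}_{ij}$ is the $(i,j)$ entry of $\mathbf{I} \odot (\partial f/\partial \mathbf{\tilde{X}})$. Assembling these three identifications entrywise yields precisely the claimed matrix equation, with the subscript $[\,\cdot\,]_{\mathbf{\tilde{X}}=\mathbf{X}}$ recording that the unstructured derivatives are taken first and only afterwards evaluated at the symmetric matrix $\mathbf{X}$.

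I do not expect a genuine analytic obstacle here: the substantive content was the computation of the symmetric structure matrix $\partial \tilde{X}_{kl}/\partial X_{ij}$, which is already in hand, so this proposition is essentially its repackaging in coordinate-free notation. The only point demanding care is the evaluation convention encoded by $[\,\cdot\,]_{\mathbf{\tilde{X}}=\mathbf{X}}$: one must treat all $n^2$ entries of $\mathbf{\tilde{X}}$ as independent when forming $\partial f/\partial \mathbf{\tilde{X}}$ and impose the symmetry constraint $X_{ij}=X_{ji}$ only \emph{after} differentiating, since the operations of differentiating and symmetrizing do not commute. I would therefore state this ordering explicitly, which also clarifies why the diagonal correction $-\,\mathbf{I}\odot(\partial f/\partial \mathbf{\tilde{X}})$ appears: it removes the double-counting that the first two terms would otherwise introduce on the diagonal, where $\tilde{X}_{ij}$ and $\tilde{X}_{ji}$ coincide.
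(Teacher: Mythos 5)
Your proposal is correct and takes essentially the same route as the paper: the paper's own justification is precisely the chain-rule contraction with the symmetric structure matrix $\partial \tilde{X}_{kl}/\partial X_{ij} = \delta_{ki}\delta_{lj} + \delta_{li}\delta_{kj} - \delta_{kl}\delta_{li}\delta_{kj}$, which yields the component identity you quote, after which the proposition is stated as its matrix-level repackaging. Your explicit matching of the three terms to the entries of $\partial f/\partial \mathbf{\tilde{X}}$, its transpose, and $\mathbf{I}\odot(\partial f/\partial \mathbf{\tilde{X}})$, together with the differentiate-first-then-symmetrize convention for $[\,\cdot\,]_{\mathbf{\tilde{X}}=\mathbf{X}}$, simply spells out what the paper leaves implicit in the words ``This gives'' preceding the proposition.
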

Note that $\frac{\partial f}{\partial X_{ij}} = \frac{\partial f}{\partial X_{ji}}$, as expected. Thus, to derive the matrix derivative with respect to symmetric matrices, we first obtain the matrix derivative of $f$, \textit{assuming the inputs are unstructured}, then form the linear combination given in Proposition \ref{Proposition:Wirtinger_symmetric}, before reinstating the structured matrix $\mathbf{X}$ as the argument. For example,
\begin{align*}
    \frac{\partial \tr(\mathbf{AX})}{\partial \mathbf{X}} = \mathbf{A} + \mathbf{A}^T - \mathbf{I \odot A}
\end{align*}
and
\begin{align*}
    \frac{\partial \tr(\mathbf{X}^2)}{\partial \mathbf{X}} = 2\mathbf{X} + 2\mathbf{X}^T - 2\mathbf{I \odot \mathbf{X}^T}.
\end{align*}

\begin{table}[t]
\centering
\setlength{\tabcolsep}{10pt}
\renewcommand{\arraystretch}{2.0}
\begin{tabular}{ |l l l| }
\hline
Structure of $\mathbf{Z}$ & $\frac{\partial f}{\partial \mathbf{Z}}(\mathbf{Z,Z^*})$ & $\frac{\partial f}{\partial \mathbf{Z}^*}(\mathbf{Z,Z^*})$\\ 
\hline
Unstructured: $\mathbf{Z} \in \mathbb{C}^{n \times n}$ & $\left[\frac{\partial f}{\partial \mathbf{\tilde{Z}}}\right]_{\mathbf{\tilde{Z}}=\mathbf{Z}}$ & $\left[\frac{\partial f}{\partial \mathbf{\tilde{Z}^*}}\right]_{\mathbf{\tilde{Z}}=\mathbf{Z}}$\\ 
Diagonal: $\mathbf{Z} = \mathbf{I} \odot \mathbf{Z}$ & $\left[\mathbf{I} \odot \frac{\partial f}{\partial \mathbf{\tilde{Z}}}\right]_{\mathbf{\tilde{Z}}=\mathbf{Z}}$ & $\left[\mathbf{I} \odot \frac{\partial f}{\partial \mathbf{\tilde{Z}^*}}\right]_{\mathbf{\tilde{Z}}=\mathbf{Z}}$\\ 
Symmetric: $\mathbf{Z} = \mathbf{Z}^T$ & $\left[\frac{\partial f}{\partial \mathbf{\tilde{Z}}} + \left(\frac{\partial f}{\partial \mathbf{\tilde{Z}}}\right)^T - \mathbf{I} \odot \frac{\partial f}{\partial \mathbf{\tilde{Z}}}\right]_{\mathbf{\tilde{Z}}=\mathbf{Z}}$ & $\left[\frac{\partial f}{\partial \mathbf{\tilde{Z}^*}} + \left(\frac{\partial f}{\partial \mathbf{\tilde{Z}^*}}\right)^T - \mathbf{I} \odot \frac{\partial f}{\partial \mathbf{\tilde{Z}^*}}\right]_{\mathbf{\tilde{Z}}=\mathbf{Z}}$\\ 
Anti-symmetric: $\mathbf{Z} = -\mathbf{Z}^T$ & $\left[\frac{\partial f}{\partial \mathbf{\tilde{Z}}} - \left(\frac{\partial f}{\partial \mathbf{\tilde{Z}}}\right)^T\right]_{\mathbf{\tilde{Z}}=\mathbf{Z}}$ & $\left[\frac{\partial f}{\partial \mathbf{\tilde{Z}^*}} - \left(\frac{\partial f}{\partial \mathbf{\tilde{Z}^*}}\right)^T\right]_{\mathbf{\tilde{Z}}=\mathbf{Z}}$\\ 
Hermitian: $\mathbf{Z} = \mathbf{Z}^\dag$ & $\left[\frac{\partial f}{\partial \mathbf{\tilde{Z}}} + \left(\frac{\partial f}{\partial \mathbf{\tilde{Z}^*}}\right)^T\right]_{\mathbf{\tilde{Z}}=\mathbf{Z}}$ & $\left[\frac{\partial f}{\partial \mathbf{\tilde{Z}^*}} + \left(\frac{\partial f}{\partial \mathbf{\tilde{Z}}}\right)^T\right]_{\mathbf{\tilde{Z}}=\mathbf{Z}}$\\ 
Anti-Hermitian: $\mathbf{Z} = -\mathbf{Z}^\dag$ & $\left[\frac{\partial f}{\partial \mathbf{\tilde{Z}}} - \left(\frac{\partial f}{\partial \mathbf{\tilde{Z}^*}}\right)^T\right]_{\mathbf{\tilde{Z}}=\mathbf{Z}}$ & $\left[\frac{\partial f}{\partial \mathbf{\tilde{Z}^*}} - \left(\frac{\partial f}{\partial \mathbf{\tilde{Z}}}\right)^T\right]_{\mathbf{\tilde{Z}}=\mathbf{Z}}$\\
\hline
\end{tabular}
\caption{The Wirtinger derivatives of $f(\mathbf{Z,Z^*})$ where $\mathbf{Z}$ is a structured matrix. The tildes above $\mathbf{\tilde{Z},\tilde{Z}^*}$ indicate that they are unstructured matrices. To derive the Wirtinger derivatives with respect to structured matrices, first obtain the Wirtinger derivative of $f$, assuming the inputs are unstructured. After forming the correct expressions given above, reinstate the structured matrices $\mathbf{Z,Z^*}$ as the arguments. Here we consider five structure classes, namely the diagonal, symmetric, anti-symmetric, Hermitian and anti-Hermitian matrices. This table is replicated from Table 6.2, \cite{hjorungnes2011complex}.}
\label{Table:Structured_Wirtinger_derivatives}
\end{table}

Having demonstrated how to find matrix derivatives with respect to real structured matrices, we now apply the same methodology to obtain the Wirtinger derivatives of $f(\mathbf{Z,Z^*})$ when $\mathbf{Z}$ is structured. Let $f(\mathbf{Z,Z^*})$ be a function of complex matrices. Applying the chain rule, the Wirtinger derivatives are given by
\begin{align}\label{Equation:Structure_matrix_chain_rule_Wirtinger}
    \frac{\partial f}{\partial Z_{ij}} &= \sum_{k,l=1}^n \frac{\partial f}{\partial \tilde{Z}_{kl}} \frac{\partial \tilde{Z}_{kl}}{\partial Z_{ij}} + \sum_{k,l=1}^n \frac{\partial f}{\partial \tilde{Z}_{kl}^*} \frac{\partial \tilde{Z}_{kl}^*}{\partial Z_{ij}}\\\nonumber
    \frac{\partial f}{\partial Z_{ij}^*} &= \sum_{k,l=1}^n \frac{\partial f}{\partial \tilde{Z}_{kl}} \frac{\partial \tilde{Z}_{kl}}{\partial Z_{ij}^*} + \sum_{k,l=1}^n \frac{\partial f}{\partial \tilde{Z}_{kl}^*} \frac{\partial \tilde{Z}_{kl}^*}{\partial Z_{ij}^*}.
\end{align}
As before, the tildes above $\mathbf{\tilde{Z},\tilde{Z}^*}$ indicate that they are unstructured matrices, so $\frac{\partial f}{\partial \mathbf{\tilde{Z}}}$, $\frac{\partial f}{\partial \mathbf{\tilde{Z}^*}}$ can be obtained using the results in Section \ref{Subsection:Functions_of_special_interest}. Note that there are now four structure matrices. If $\mathbf{Z}$ is unstructured, then they are simply
\begin{align*}
    \frac{\partial \tilde{Z}_{kl}}{\partial Z_{ij}} = \delta_{ki}\delta_{lj} \qquad \frac{\partial \tilde{Z}_{kl}}{\partial Z_{ij}^*} = 0 \qquad \frac{\partial \tilde{Z}_{kl}^*}{\partial Z_{ij}} = 0 \qquad \frac{\partial \tilde{Z}_{kl}^*}{\partial Z_{ij}^*} = \delta_{ki}\delta_{lj}.
\end{align*}
If $\mathbf{Z}$ is Hermitian, we have
\begin{proposition}[Wirtinger derivatives with respect to Hermitian matrices]\label{Proposition:Wirtinger_hermitian}
    Let $f(\mathbf{Z,Z^*})$ be a function of complex Hermitian matrices. Then the Wirtinger derivatives of $f$ with respect to $\mathbf{Z,Z^*}$ are given by
    \begin{align*}
        \frac{\partial f}{\partial \mathbf{Z}} = \left[\frac{\partial f}{\partial \mathbf{\tilde{Z}}} + \left(\frac{\partial f}{\partial \mathbf{\tilde{Z}^*}}\right)^T\right]_{\mathbf{\tilde{Z}}=\mathbf{Z}} \qquad \text{and} \qquad
        \frac{\partial f}{\partial \mathbf{Z^*}} = \left[\frac{\partial f}{\partial \mathbf{\tilde{Z}^*}} + \left(\frac{\partial f}{\partial \mathbf{\tilde{Z}}}\right)^T\right]_{\mathbf{\tilde{Z}}=\mathbf{Z}}.
    \end{align*}
\end{proposition}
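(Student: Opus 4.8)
The plan is to run the derivation in exact parallel with the real symmetric case (Proposition~\ref{Proposition:Wirtinger_symmetric}), the only change being that the single real chain rule is replaced by the four-term Wirtinger chain rule recorded in Equation~\ref{Equation:Structure_matrix_chain_rule_Wirtinger}. All of the content of the proposition is packaged into the four \emph{structure matrices}
\[
\frac{\partial \tilde{Z}_{kl}}{\partial Z_{ij}},\qquad
\frac{\partial \tilde{Z}_{kl}}{\partial Z_{ij}^*},\qquad
\frac{\partial \tilde{Z}_{kl}^*}{\partial Z_{ij}},\qquad
\frac{\partial \tilde{Z}_{kl}^*}{\partial Z_{ij}^*},
\]
so once these are determined the proposition follows by substitution into Equation~\ref{Equation:Structure_matrix_chain_rule_Wirtinger} and collapsing the resulting Kronecker deltas.

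First I would record the effect of Hermiticity at the level of matrix entries. Since $\mathbf{Z}=\mathbf{Z}^\dag$ we have $Z_{kl}=Z_{lk}^*$, equivalently $Z_{kl}^*=Z_{lk}$. Setting $\tilde{\mathbf{Z}}=\mathbf{Z}$, the unstructured entries are then $\tilde{Z}_{kl}=Z_{kl}$ and $\tilde{Z}_{kl}^*=Z_{kl}^*=Z_{lk}$; the crucial move is to substitute the Hermiticity constraint \emph{before} differentiating. Treating $Z_{ij}$ and $Z_{ij}^*$ as the Wirtinger-independent variables (so that $\partial Z_{ab}/\partial Z_{cd}=\delta_{ac}\delta_{bd}$ and $\partial Z_{ab}/\partial Z_{cd}^*=0$ after substitution), differentiation of the two displayed expressions gives
\begin{align*}
\frac{\partial \tilde{Z}_{kl}}{\partial Z_{ij}} &= \delta_{ki}\delta_{lj}, &
\frac{\partial \tilde{Z}_{kl}^*}{\partial Z_{ij}} &= \frac{\partial Z_{lk}}{\partial Z_{ij}} = \delta_{li}\delta_{kj},\\
\frac{\partial \tilde{Z}_{kl}}{\partial Z_{ij}^*} &= \frac{\partial Z_{lk}^*}{\partial Z_{ij}^*} = \delta_{li}\delta_{kj}, &
\frac{\partial \tilde{Z}_{kl}^*}{\partial Z_{ij}^*} &= \delta_{ki}\delta_{lj}.
\end{align*}

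Substituting these into Equation~\ref{Equation:Structure_matrix_chain_rule_Wirtinger} and summing out the deltas yields, in components,
\begin{align*}
\frac{\partial f}{\partial Z_{ij}} &= \sum_{k,l}\frac{\partial f}{\partial \tilde{Z}_{kl}}\delta_{ki}\delta_{lj} + \sum_{k,l}\frac{\partial f}{\partial \tilde{Z}_{kl}^*}\delta_{li}\delta_{kj} = \frac{\partial f}{\partial \tilde{Z}_{ij}} + \frac{\partial f}{\partial \tilde{Z}_{ji}^*},\\
\frac{\partial f}{\partial Z_{ij}^*} &= \sum_{k,l}\frac{\partial f}{\partial \tilde{Z}_{kl}}\delta_{li}\delta_{kj} + \sum_{k,l}\frac{\partial f}{\partial \tilde{Z}_{kl}^*}\delta_{ki}\delta_{lj} = \frac{\partial f}{\partial \tilde{Z}_{ji}} + \frac{\partial f}{\partial \tilde{Z}_{ij}^*}.
\end{align*}
Reading the index swap $(i,j)\mapsto(j,i)$ as a transpose and then restoring $\tilde{\mathbf{Z}}=\mathbf{Z}$ repackages these precisely as the two claimed matrix identities.

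I expect the main obstacle to be purely conceptual rather than computational: justifying the four structure matrices, i.e.\ reconciling the Wirtinger convention that $\mathbf{Z}$ and $\mathbf{Z}^*$ are independent with the Hermitian constraint that links $Z_{kl}^*$ to $Z_{lk}$. The resolution is to substitute the constraint into the unstructured entries up front, after which every derivative is an elementary Kronecker delta. A secondary point worth verifying explicitly is the diagonal: unlike the symmetric real case in Proposition~\ref{Proposition:Wirtinger_symmetric}, \emph{no} correction term $-\mathbf{I}\odot(\cdot)$ appears here. I would check that the uniform formula is already correct when $i=j$, where $Z_{ii}$ is real so that $\tilde{Z}_{ii}=\tilde{Z}_{ii}^*=Z_{ii}$ and both contributions $\partial f/\partial\tilde{Z}_{ii}$ and $\partial f/\partial\tilde{Z}_{ii}^*$ are legitimately present, so the conjugate-transpose pairing in the statement handles the diagonal automatically.
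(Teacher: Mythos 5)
Your proof is correct and takes essentially the same approach as the paper: read off the structure matrices from the Hermiticity relation $Z_{kl}^*=Z_{lk}$, substitute them into the chain rule of Equation~\ref{Equation:Structure_matrix_chain_rule_Wirtinger}, collapse the Kronecker deltas, and interpret the index swap $(i,j)\mapsto(j,i)$ as a transpose. The only cosmetic differences are that you obtain $\partial f/\partial \mathbf{Z^*}$ by a second direct application of the chain rule (the route the paper dismisses as ``similarly obtained'') rather than via the paper's shortcut $\mathbf{Z^*}=\mathbf{Z}^T\implies \partial f/\partial \mathbf{Z^*}=\left(\partial f/\partial \mathbf{Z}\right)^T$, and that you add a worthwhile consistency check (absent from the paper) that no $-\mathbf{I}\odot(\cdot)$ correction is needed on the diagonal.
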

\begin{proof}
    Let us first consider the Wirtinger derivative $\frac{\partial f}{\partial \mathbf{Z}}$. We have the structure matrices
    \begin{align*}
        \frac{\partial \tilde{Z}_{kl}}{\partial Z_{ij}} = \delta_{ki}\delta_{lj} \qquad \text{and} \qquad \frac{\partial \tilde{Z}_{kl}^*}{\partial Z_{ij}} = \delta_{li}\delta_{kj}.
    \end{align*}
    Thus
    \begin{align*}
        \frac{\partial f}{\partial Z_{ij}} &= \sum_{k,l=1}^n \frac{\partial f}{\partial \tilde{Z}_{kl}} \frac{\partial \tilde{Z}_{kl}}{\partial Z_{ij}} + \sum_{k,l=1}^n \frac{\partial f}{\partial \tilde{Z}_{kl}^*} \frac{\partial \tilde{Z}_{kl}^*}{\partial Z_{ij}}\\
        &= \sum_{k,l=1}^n \frac{\partial f}{\partial \tilde{Z}_{kl}} \delta_{ki}\delta_{lj} + \sum_{k,l=1}^n \frac{\partial f}{\partial \tilde{Z}_{kl}^*} \delta_{li}\delta_{kj}\\
        &= \frac{\partial f}{\partial \tilde{Z}_{ij}} + \frac{\partial f}{\partial \tilde{Z}_{ji}^*}.
    \end{align*}
    $\frac{\partial f}{\partial \mathbf{Z^*}}$ can be similarly obtained. Alternatively, because $\mathbf{Z^*} = \mathbf{Z}^T$ it follows that
    \begin{align*}
        \frac{\partial f}{\partial \mathbf{Z^*}} = \frac{\partial f}{\partial \mathbf{Z}^T} = \left( \frac{\partial f}{\partial \mathbf{Z}} \right)^T = \left[\frac{\partial f}{\partial \mathbf{\tilde{Z}}} + \left(\frac{\partial f}{\partial \mathbf{\tilde{Z}^*}}\right)^T\right]_{\mathbf{\tilde{Z}}=\mathbf{Z}}^T = \left[\frac{\partial f}{\partial \mathbf{\tilde{Z}^*}} + \left(\frac{\partial f}{\partial \mathbf{\tilde{Z}}}\right)^T\right]_{\mathbf{\tilde{Z}}=\mathbf{Z}}.
    \end{align*}
    Note that $\frac{\partial f}{\partial Z_{ij}} = \frac{\partial f}{\partial Z_{ji}^*}$, as expected.
\end{proof}

\section{Applications in Quantum Information}\label{Subsection:Applications_in_QI}
To illustrate the utility of this calculus, we discuss a few examples of relevance in quantum information.

\subsection{Examples}
\begin{example}\label{Example:purity}
Consider the purity function on density operators, $f: \mathcal{D}(\mathcal{H}_N) \longrightarrow \mathbb{R}$, $\rho \mapsto \tr(\rho^2)$. The purity of a density operator/state quantifies the noisiness of a state \cite{wilde2013quantum}. Which state $\rho$ has the lowest purity? Intuitively, it is the maximally mixed state. Let us verify this via two approaches: the `classical' method and Wirtinger's method. 

Since this is our first example, we shall be more thorough with the details.
\begin{enumerate}[i.]
\item (Classical) For any $\rho \in \mathcal{D}(\mathcal{H}_N)$ we can write $\rho_{ii}=a_{ii}$ and $\rho_{ij}=a_{ij}+ib_{ij}=\rho^*_{ji}$ for $i \neq j$, where $a_{ij},b_{ij} \in \mathbb{R}$. For example if $N=2$ then
\begin{align*}
    \rho = 
    \begin{bmatrix}
    \rho_{11} & \rho_{12}\\
    \rho_{21} & \rho_{22}
    \end{bmatrix}
    =
    \begin{bmatrix}
    a_{11} & a_{12}+ib_{12}\\
    a_{12}-ib_{12} & a_{22}
    \end{bmatrix}.
\end{align*}
Thus
\begin{align*}
    \tr(\rho^2) = \sum_{i,j=1}^N \braket{i|\rho|j}\braket{j|\rho|i} &= \sum_{i,j=1}^N |\braket{i|\rho|j}|^2\\
    &= \sum_{i=1}^N |\rho_{ii}|^2 + \sum_{i \neq j} |\rho_{ij}|^2\\
    &= \sum_{i=1}^N a_{ii}^2 + 2\sum_{i<j} (a_{ij}^2+b_{ij}^2).
\end{align*}
The purity function is viewed as a function of $N^2$ real inputs:
\begin{align*}
    f(\{a_{ii}\}_{i \in [N]}; \{a_{ij},b_{ij}\}_{i<j}) = \sum_{i=1}^N a_{ii}^2 + 2\sum_{i<j} (a_{ij}^2+b_{ij}^2).
\end{align*}
We formulate the Lagrangian function
\begin{align*}
    \mathcal{L} = \sum_i a_{ii}^2 + 2\sum_{i<j} (a_{ij}^2+b_{ij}^2) - \lambda \left( \sum_i a_{ii} - 1 \right),
\end{align*}
where $\lambda$ is the Lagrange multiplier taking into account the constraint $\tr \rho = 1$. The first-order conditions give
\begin{align*}
    \frac{\partial \mathcal{L}}{\partial a_{ii}} = 0 &\implies a_{ii}=\frac{\lambda}{2}\\
    \frac{\partial \mathcal{L}}{\partial a_{ij}} = 0 &\implies a_{ij}=0\\
    \frac{\partial \mathcal{L}}{\partial \lambda} = 0 &\implies \lambda = \frac{2}{N},
\end{align*}
therefore the optimal state is $\rho = \frac{I}{N}$, the maximally mixed state.

\item (Wirtinger) Per Wirtinger's approach, we retain the structure of $\rho$. Here $f(\rho,\rho^*) = \tr(\rho^2)$ where $\rho$ is Hermitian and thus structured. We formulate the Lagrangian: 
\begin{align*}
    \mathcal{L} = \tr(\rho^2) - \lambda(\tr \rho-1).
\end{align*}
To find the optimal $\rho$, we invoke Proposition \ref{Proposition:Optimization_with_Wirtinger_derivatives_matrix}. The conditions required are
\begin{align*}
    \frac{\partial \mathcal{L}}{\partial \rho} &= \frac{\partial \tr(\rho^2)}{\partial \rho} - \lambda \frac{\partial (\tr \rho-1)}{\partial \rho} = \mathbf{0}\\
    \frac{\partial \mathcal{L}}{\partial \lambda} &= 0.
\end{align*}
To obtain $\frac{\partial \tr(\rho^2)}{\partial \rho}$ and $\frac{\partial \tr \rho}{\partial \rho}$ we use
Proposition \ref{Proposition:Wirtinger_derivative_of_traces} (or simply consult Table \ref{Table:List_Wirtinger_derivatives}) together with the chain rule, see Proposition \ref{Proposition:Wirtinger_symmetric}. For example
\begin{align*}
    \frac{\partial \tr(\rho^2)}{\partial \rho} &= \left[\frac{\partial \tr(\tilde{\rho}^2)}{\partial \tilde{\rho}} + \cancel{ \left(\frac{\partial \tr(\tilde{\rho}^2)}{\partial \tilde{\rho}^*}\right)^T}\right]_{\tilde{\rho}=\rho}\\
    &= 2\rho^T,
\end{align*}
and similarly $\frac{\partial \tr \rho}{\partial \rho} = I$. Therefore $\frac{\partial \mathcal{L}}{\partial \rho} = \mathbf{0} \implies 2\rho^T-\lambda I = \mathbf{0} \implies \rho = \frac{\lambda}{2}I$. The constraint $\tr \rho = 1$ then gives $\rho = \frac{I}{N}$.
\end{enumerate}
\end{example}

\begin{example}\label{Example:vNEntropy}
Which density operator $\rho$ maximizes the von Neumann entropy $H(\rho)=-\tr(\rho \log \rho)$? The usual textbook solution makes use of the quantum data-processing inequality, which states that $S(\sigma\|\rho) \geq S(\mathcal{N}(\sigma)\|\mathcal{N}(\rho))$ for quantum states $\sigma,\rho$ and any quantum channel $\mathcal{N}$ (here $S(\sigma\|\rho)$ is the quantum relative entropy). Defining the quantum channel $\mathcal{E} : \mathcal{D}(\mathcal{H}_d) \longrightarrow \mathcal{D}(\mathcal{H}_d)$, $\rho \mapsto \frac{I}{d}$, we have 
$
-H(\rho)+\log d = D(\rho\|\frac{I}{d}) \geq D(\mathcal{E}(\rho)\|\mathcal{E}(\frac{I}{d})) = D(\frac{I}{d}\|\frac{I}{d}) = 0.
$
Thus $H(\rho) \leq \log d$, which is attained when $\rho=\frac{I}{d}$.

Using Wirtinger calculus, we first prepare the Lagrangian $\mathcal{L} = -\tr(\rho \log \rho) - \lambda(\tr(\rho)-1)$. $\frac{\partial \mathcal{L}}{\partial \rho} = \mathbf{0} \implies 1+\log \rho^T + \lambda = \mathbf{0} \implies \rho=e^{-1-\lambda}I \implies \rho = \frac{I}{d}$.
\end{example}

\begin{example}\label{Example:constrained_vNEntropy}
We extend the example above by imposing an additional constraint on $\rho$, namely that $\tr(\rho H)=E$ for the Hamiltonian $H$ governing the quantum system. So we want to solve
\bex
\text{minimize}_{\rho \in \mathcal{D}(\mathcal{H})} && H(\rho) \\\nonumber
\text{s.t.} && \tr (\rho H) = E.
\eex
Note that because $\rho,H$ are Hermitian, $\tr(\rho H)$ is real. We require $h_{\min} < E < h_{\max}$ ($h$ denotes a generic eigenvalue of $H$), otherwise the constraint $\tr(\rho H)=E$ cannot be satisfied.

Set up the Lagrangian
\begin{align*}
    \mathcal{L} = -\tr(\rho \log \rho) -  \beta (\tr(\rho H)-E) - \eta(\tr\rho-1),
\end{align*}
where $\beta$ and $\eta$ are the Lagrange multipliers. Making use of Proposition \ref{Proposition:Wirtinger_derivative_of_traces} for trace functions and the chain rule, Proposition \ref{Proposition:Wirtinger_hermitian}, setting the Wirtinger derivative of $\mathcal{L}$ to zero gives
\begin{align*}
\frac{\partial \mathcal{L}}{\partial \rho} = \mathbf{0} &\implies -(\log \rho)^T - I - \beta H^T - \eta I = \mathbf{0}\\
&\implies \rho = e^{-(\eta+1)}e^{-\beta H}\\
&\implies \rho = \frac{e^{-\beta H}}{\tr(e^{-\beta H})} \qquad \text{after normalization},
\end{align*}
where $\beta$ is to be determined from the constraint $\tr(\rho H) = E$. We see that the solution takes the form of a Gibbs state with respect to the Hamiltonian $H$ and inverse temperature $\beta$ \cite{pathria2021statistical,kardar2007statistical}.
\end{example}

\begin{example}\label{Example:Frobenius}
In this example, we consider the Frobenius/Hilbert-Schmidt norm $\|\cdot\|_F$ as the figure of merit. Given a linear operator $L \in \mathcal{L}(\mathcal{H})$, we want to find the operator $T$ such that the Frobenius metric $\|T-L\|_F$ is minimized, while keeping the Frobenius norm and trace of $T$ fixed. That is, we want to solve
\bex
    \text{minimize}_{T \in \mathcal{L}(\mathcal{H})} && \|T-L\|_F \\\nonumber
    \text{s.t.} && \|T\|_F=C \\\nonumber
    && \tr T = D.
\eex
\end{example}
The Frobenius norm can be expressed as a trace: $\|T\|_F := \sqrt{\tr(T^\dag T)}$. Since the map $x \mapsto x^2$ is increasing on the positive numbers, we could equally well minimize $\|T\|_F^2$---this removes the square root. The Lagrangian is then
\begin{align*}
    \mathcal{L} &= \|T-L\|_F^2 + \lambda(\|T\|_F^2-C) + \eta(\tr T - D)\\
    &= \tr\left( (T-L)^\dag(T-L) \right) + \lambda(\tr(T^\dag T)-C^2) + \eta(\tr T - D).
\end{align*}
Then, we obtain
\begin{align*}
    \frac{\partial \mathcal{L}}{\partial T} = \mathbf{0} &\implies T^*-L^*+\lambda T^*+\eta I = \mathbf{0} \qquad \text{(cf.~Table \ref{Table:List_Wirtinger_derivatives})}\\
    &\implies T = \frac{L-\eta I}{1+\lambda}.
\end{align*}
The Lagrange multipliers $\lambda$ and $\eta$ can then be obtained from the constraint equations. We note that if there were no trace constraint then $T = \frac{C}{\|L\|_F}L$, while if there were no Frobenius norm constraint then $T = L - \frac{\tr L - D}{\tr I}I$.

\subsection{Discussion}
Above we gave a few simple applications of Wirtinger Calculus in quantum information. For more involved applications of this framework in quantum information, we direct the reader to the recent papers on quantum tomography \cite{utreras2019stochastic} and quantum estimation theory \cite{munoz2022complex}. 

The WC first found widespread practical use in electrical engineering. The interested reader can consult \cite{hjorungnes2011complex} and \cite{kreutz2009complex} for the myriad applications of WC in various topics, such as digital signal processing, wireless communications, control theory and adaptive filtering. More recently the WC is also being widely utilized in machine learning (broadly defined). See for example \cite{bahmani2017phase,bouboulis2010extension,li2010algorithms,pu2021input,virtue2019complex,zhang2019fully}. We hope that our little article would be of use to researchers in these communities as well.

\section*{Acknowledgments}
This work is supported
by the National Research Foundation, Singapore, and A*STAR under its CQT Bridging Grant and its Quantum
Engineering Programme under grant NRF2021-QEP2-02-P05.

\addcontentsline{toc}{section}{References}
\bibliographystyle{alpha}
\bibliography{main}

\end{document}